\DeclareMathAlphabet{\mathsl}{OT1}{cmr}{m}{sl}
\long\def\comment#1{\relax}
\long\def\omitthis#1{\relax}
\newcommand{\eg}{{\em e.g.}}
\newcommand{\ie}{{\em i.e.}}
\newcommand{\Pscr}{\mathcal{P}}
\newcommand{\Oscr}{\mathcal{O}}
\newcommand{\Cscr}{\mathcal{C}}
\newcommand{\Lscr}{\mathcal{L}}
\newcommand{\Nscr}{\mathcal{N}}
\newcommand{\Sscr}{\mathcal{S}}
\newcommand{\Rscr}{\mathcal{R}}
\newcommand{\Tscr}{\mathcal{T}}
\newcommand{\Gscr}{\mathcal{G}}
\newcommand{\Vscr}{\mathcal{V}}
\newcommand{\TC}{\mathcal{TC}}
\newcommand{\IK}{\mathcal{IK}}
\newcommand{\DC}{\mathcal{DC}}
\newcommand{\EQ}{\mathcal{EQ}}
\newcommand\termEqApprox{\mathsf{termEqApprox}}
\newcommand\canEq{\mathsf{canEq}}
\newcommand\eqCheck{\mathsf{eqCheck}}
\newcommand\sk{\ensuremath{\mathsf{sk}}\xspace}
\newcommand\pk{\ensuremath{\mathsf{pk}}\xspace}
\newcommand\enc{\ensuremath{\mathsf{e}}\xspace}
\newcommand\n{\ensuremath{\mathsf{n}}\xspace}
\renewcommand\k{\ensuremath{\mathsf{k}}\xspace}
\renewcommand\c{\ensuremath{\mathsf{c}}\xspace}
\renewcommand\t{\ensuremath{\mathsf{t}}\xspace}
\newcommand\pl{\ensuremath{\mathsf{pl}}\xspace}
\renewcommand\r{\ensuremath{\mathsf{r}}\xspace}
\newcommand\m{\ensuremath{\mathsf{m}}\xspace}
\renewcommand\v{\ensuremath{\mathsf{v}}\xspace}
\newcommand\p{\ensuremath{\mathsf{p}}\xspace}
\newcommand\ms{\ensuremath{\mathsf{ms}}\xspace}
\newcommand\Syms{\ensuremath{\mathsf{Syms}}\xspace}
\newcommand\dom{\ensuremath{\mathsf{dom}}\xspace}
\newcommand\tv{\ensuremath{\mathsf{tt}}\xspace}
\newcommand\cur{\ensuremath{\mathsf{cur}}\xspace}
\newcommand\floor{\ensuremath{\mathsf{floor}}\xspace}
\newcommand\ceiling{\ensuremath{\mathsf{ceiling}}\xspace}
\newcommand\tr{\ensuremath{\mathsf{tr}}\xspace}
\newcommand\tc{\ensuremath{\mathsf{tc}}\xspace}
\newcommand\new{\ensuremath{\mathsf{new}}\xspace}
\newcommand\myif[3]{\ensuremath{\mathsf{if}~#1~\mathsf{then}~#2~\mathsf{else}~#3}\xspace}
\newcommand\key[1]{\ensuremath{\mathsf{#1}}\xspace}
\newcommand\ds{\ensuremath{\mathsf{ds}}\xspace}
\newcommand\tG{\mathsf{tG}}
\newcommand\ks{\mathsf{keys}}
\renewcommand\sb{\mathsf{sb}}
\newcommand\ssb{\mathsf{ssb}}
\newcommand\mtssb{\emptyset}
\newcommand\css{\mathsf{css}}
\newcommand\sgen{\mathsf{sgen}}
\newcommand\sgenB{\mathsf{sgenB}}
\newcommand\addKeys{\mathsf{addKeys}}
\newcommand\add{\mathsf{add}}
\newcommand\symk{\mathsf{symk}\xspace}
\newcommand\dc{\mathsf{dc}}
\newcommand\Neq{\mathsf{neq}}
\newcommand\Eq{\mathsf{eq}}
\newcommand\sym{\mathsf{sym}}
\newcommand\checkSubst{\mathsf{checkSubst}}
\newcommand\checkBnd{\mathsf{chkBnd}}
\newcommand{\tup}[1]{\langle#1\rangle}
\newcommand\lra{\longrightarrow}
\newtheorem{theorem}{\bf Theorem}[section]
\newtheorem{definition}[theorem]{\bf Definition}
\newtheorem{example}[theorem]{\bf Example}
\newtheorem{lemma}[theorem]{\bf Lemma}
 \newtheorem{proposition}[theorem]{\bf Proposition}
\newenvironment{proof}{\paragraph{Proof:}}{\hfill$\square$}
\newcommand\bij{\mathsf{bij}}
\newcommand\BB{\mathcal{BB}}
\newcommand\TL{\mathcal{TL}}
\newcommand\tl{\mathsf{pl}}
\begin{document}

\title{Symbolic Timed Observational Equivalence}


\author{\IEEEauthorblockN{Vivek Nigam\IEEEauthorrefmark{1}\IEEEauthorrefmark{3} \quad Carolyn Talcott\IEEEauthorrefmark{2} \quad Abra\~ao Aires Urquiza\IEEEauthorrefmark{1}}
\IEEEauthorblockA{\IEEEauthorrefmark{1}Federal University of Para\'iba, Jo\~ao Pessoa, Brazil,
\texttt{\{vivek.nigam,abraauc\}@gmail.com}
}
\IEEEauthorblockA{\IEEEauthorrefmark{2}SRI International, Melno Park, USA, 
\texttt{clt@csl.sri.com}
}
\IEEEauthorblockA{\IEEEauthorrefmark{3}fortiss, Munich, Germany\\ 
}
}


\maketitle
\begin{abstract}
Intruders can infer properties of a system by measuring the time it takes for the system to respond to some request of a given protocol, that is, by exploiting time side channels. These properties may help intruders distinguish whether a system is a honeypot or concrete system helping him avoid defense mechanisms, or track a user among others violating his privacy. Observational equivalence is the technical machinery used for verifying whether two systems are distinguishable. Automating the check for observational equivalence suffers the problem of state-space explosion problem. Symbolic verification is used to mitigate this problem allow for the verification of relatively large systems. This paper introduces a novel definition of timed observational equivalence based on symbolic time constraints. Protocol verification problems can then be reduced to problems solvable by off-the-shelf SMT solvers. We implemented such machinery in Maude and carry out a number of preliminary experiments demonstrating the feasibility of our approach.
\end{abstract}

\section{Introduction}
\label{intro}
Time side channels can be exploited by intruders in order to infer properties of systems, helping them avoid defense mechanisms, and track users, violating their privacy. For example, honeypots are normally used for attracting intruders in order to defend real systems from their attacks. However, as honeypots run over virtual machines whereas normal client systems usually do not, it takes longer for a honeypot to respond to some protocol requests. This information can be used by the attacker to determine which servers are real and which are honeypots.
For another example, passports using RFID mechanisms have been shown to be vulnerable to privacy attacks. An intruder can track a particular's passport by replaying messages of previous sessions and measuring response times.

The formal verification of such properties is different from usual reachability based properties, such as secrecy, authentication and other correspondence properties. In the verification of reachability properties, one searches for a  trace that exhibits the flaw, \eg, the intruder learning a secret. In attacks 
such as the ones described above, one searches instead for behaviors that can distinguish two system, \eg, a behavior that can be observed when interacting with one system, but that cannot be observed when interacting with the other system. That is, to check whether the systems are \emph{observationally distinguishable}. This requires reasoning over sets of traces.

Various notions of \emph{observational equivalence} have been proposed in the programming languages community as well as in concurrent systems~\cite{agha-mason-smith-talcott-96jfp,hofmann16ppdp,milner,gunter} using, for example, logical relations and bisimulation. Observational equivalence has also been proposed for protocol verification notably the work of Cortier and Delaune~\cite{cortier09csf}. A number of properties, \eg, unlinkability and anonymity~\cite{arapinis10csf}, have been reduced to the problem of observational equivalence. As protocol verification involves infinite domains, the use of symbolic methods has been essential for the success of such approaches.


The contribution of this paper is three-fold:
\begin{itemize}
\item \textbf{Symbolic Timed Observational Equivalence:} We propose a novel definition of timed equivalence over timed protocol instances~\cite{nigam16esorics}.
Timing information, \eg, duration of computation, is left symbolically and can be specified in the form of time constraints relating multiple time symbols, \eg, $\tv_1 \geq \tv_2 + 10$;

\item \textbf{SMT Solvers for proving Time Observational Equivalence:} SMT solvers are used in two different ways. We specify the operational semantics of timed protocols using Rewriting Modulo SMT~\cite{rocha-2012}. 
Instead of instantiating time symbols with concrete values, in Rewriting Modulo SMT, a configuration of the system is symbolic and therefore may represent an unbounded number of concrete configurations. Rewriting of a symbolic configuration is only allowed if the set of (time) constraints in the resulting state is satisfiable. SMT-Solvers are used to perform this check. This means not only that there are a finite number of symbolic traces starting from a given configuration, but also reduces considerably the search space needed to enumerate these traces. We demonstrate this with experiments. 

The second application of SMT-Solvers is on the proof of timed observational equivalence, namely, to check whether the timing of observations can be matched. This check involves the checking for the satisfiability of $\exists \forall$ formulas~\cite{duterte15smt}.

\item \textbf{Implementation}: Relying on the Maude~\cite{clavel-etal-07maudebook}  support for Rewriting Modulo SMT using the SMT-solvers CVC4~\cite{barrett11cvc4} or
Yices~\cite{yices}, we implemented in Maude the machinery necessary for enumerating symbolic traces. However, as checking for the satisfiability of $\exists \forall$ formulas~\cite{duterte15smt} is not supported by Maude, we integrate our Maude machinery with the SMT solver Yices~\cite{duterte15smt}. We carry out some proof-of-concept experiments demonstrating the feasibility of our approach.
\end{itemize}


Section~\ref{examples} describes some motivating examples on how intruders can using time side channels for his benefit. 
We introduce the basic symbolic language in Section~\ref{sec:bas-language} and the timed protocol language in Section~\ref{subsec:basic-symb-op-sem}. Section~\ref{sec:obs-eq} introduces symbolic timed observational equivalence describing how to prove this property. Section~\ref{sec:exp} describes our implementation architecture and the experiments carried out. Finally, in Section~\ref{sec:related}, we conclude by commenting on related and future work.

Some missing proofs are shown in the Appendix.

\section{Examples}\label{examples}
We discuss some motivating examples illustrating how intruders can exploit time side channels of protocols. 

\paragraph{Red Pill}
Our first example is taken from~\cite{HoBBP14woot}. The attack is based on the concept of \emph{red pills}. The overall goal of the attacker is to determine whether some system is running on a virtual machine or not. As honeypots trying to lure attackers normally run on virtual machines, determining if a system is running on a virtual machines or not gives an attacker one means to avoid honeypots~\cite{HoBBP14woot}. The system running in a virtual machine or a concrete machine  follow exactly the same protocol. 

When an application connects to the malicious server,
the server first sends a \emph{baseline} request followed by 
a \emph{differential} request.  The time to respond to the
baseline request is same whether running in a virtual machine
or not and is used for calibration.  The time to respond to the
differential request is longer when executed in a virtual machine.
When not taking time into account, the set of traces for this
exchange is the same whether the application is running on a virtual machine or not. However, if we also consider the time to respond to the
two requests, the timed traces of applications running on virtual
machines can be distingushed from those of applications running
on native hardware.

\paragraph{Passport RFID}
Our second example comes from work of Chothia and Smirnov 
\cite{chothia10fc} investigating the security of \textit{e-passports}.
These passports contain an RFID tag that, when powered, broadcast
information intended for passport readers.  Also, once powered,
e-passport broadcasts can't be turned off.
Chothia and Smirnov identified a flaw in one of the passport’s protocols that makes it possible to trace the movements of a particular passport, without having to break the passport’s cryptographic key. In particular, if the attacker records one session between the passport and a legitimate reader, one of the recorded
messages can be replayed to  distinguish that passport from other
passports.  Assuming that the target carried their passport on them, an attacker could place a device in a doorway that would detect when the target entered or left a building. 
In the protocol, the passport receives an encryption and a mac 
verifying the integrity of the encryption. The protocol first
checks the mac, and reports an error if the check fails.
If the mac check succeeds, it checks the encryption.
This will fail if the encryption isn't fresh.  
When the recorded encryption, mac pair is replayed to the
recorde passport, the mac check will succeed but the encryption check will fail, while the mac check will fail when carried out by any
other passport as it requires a key unique to the passport.
The time to failure is significantly longer for the targeted passport
than for others, since only the mac check is needed and it is faster.

\paragraph{Anonymous Protocol} Abadi and Fournet~\cite{abadi04tcs} proposed an anonymous group protocol where members of a group can communicate within each other without revealing that they belong to the same group. A member of a group broadcasts a message, $m$, encrypted with the shared group key. Whenever a member of a group receives this message, it is able to decrypt the message and then check whether the sender indeed belongs to the group and if the message is directed to him. In this case, the receiver broadcasts an encrypted response $m'$. 

Whenever a player that is not member of the group receives the message $m$, it does not simply drop the message, but sends a decoy message with the same shape as if he belongs to the group, \ie, in the same shape as $m'$. In this way, other participants and outsiders cannot determine whether a two players belong to the same group or not.

However, as argued in \cite{corin04fmse}, by measuring the time when a response is issued, an intruder  can determine whether two players belong to the same group. This is because decrypting and generating a response take longer than just sending a decoy message.

\section{Term Language}
\label{sec:bas-language}



The basic term language contains usual cryptographic operators such as encryption, nonces, tuples.
More precisely the term language is defined by the following grammar. We assume given text constants, $\Tscr$ and player names $\Pscr$. We also assume a countable set of nonces, $\Nscr$, and of symbols, $\Syms$, as well as a countable number of sorted variables, $\Vscr$, where $\Nscr, \Syms$ and $\Vscr$ are disjoint. Below $\v_p$ represents a variable of sort player. 
\[
  \begin{array}{l@{~}l@{\quad}l@{\quad}l}
  \multicolumn{2}{l}{\textbf{Basic Constants:}}\\
      \c := &  \t \in \Tscr & \textrm{Text Constants}\\
      & \mid \p \in \Pscr & \textrm{Player Names} \\
      & \mid \n \in \Nscr & \textrm{Nonces}\\[2pt]
     \multicolumn{2}{l}{\textbf{Keys:}}\\
     \k :=  & \mid \symk & \textrm{Symmetric key}\\
      & \mid \pk(\p) \mid \pk(\v_p) & \textrm{Public key of a player}\\
      & \mid \sk(\p) \mid \sk(\v_p) & \textrm{Secret key of a player}\\[2pt]
     \multicolumn{2}{l}{\textbf{Symbols:}}\\
     \sym :=  & \mid \sym \in \Syms & \textrm{Symbol}\\[2pt]
     \multicolumn{2}{l}{\textbf{Terms:}}\\ 
     \m := & \c & \textrm{Basic constants}\\
      & \mid \k & \textrm{Keys}\\
      & \mid \v \in \Vscr & \textrm{Variables}\\
      & \mid \sym \in \Syms& \textrm{Symbols}\\
      & \mid \enc(\m,\k) & \textrm{Encryption of term \m with key \k }\\
      & \mid \tup{\m_1, \ldots, \m_n} & \textrm{Tuples} 
  \end{array}
\]
A term is \emph{ground} if it does not contain any occurrence of variables and symbols. A term is \emph{symbolic} if it does not contain any occurrence of variables, but it may contain occurrences of symbols. $\ms,\ms_1,\ms_2, \ldots$ will range over symbolic terms. We define $\Syms(\ms)$ as the set of symbols appearing in a symbolic term.

It is possible to add other cryptographic constructions, such as hash, signatures, but in order to keep things simple and more understandable, we only include encryption. As hashes and signatures can be specified using encryption, this is not limiting. Finally, it is easy to extend the results here with fresh keys. These are treated in the same way as nonces, but to keep it simple, we do not include them.


We will use two types of (capture avoiding) substitutions. \emph{Variable substitutions} written $\sb, \sb_1, \sb_2, \ldots$ which are maps from variables to symbolic terms $\sb = [\v_1 \mapsto \ms_1,\v_2 \mapsto \ms_2, \ldots, \v_n \mapsto \ms_n ]$. \emph{Symbol substitutions} written $\ssb, \ssb_1, \ssb_2, \ldots$ mapping symbols to symbolic terms $\ssb = [\sym_1 \mapsto \ms_1,\sym_2 \mapsto \ms_2, \ldots, \sym_n \mapsto \ms_n ]$.

\subsection{Symbolic Term Constraints}
Intuitively, variables are entities that can be replaced by symbolic terms, while a symbol  denotes a (possibly infinite) set of terms.  For example, if the symbol $\sym_j$ can be instantiated by any one of the (symbolic) terms $\{\ms_1, \ldots, \ms_n\}$, then the symbolic term $\enc(\sym,\k)$ represents the set of terms:
\[
\{\enc(\ms,\k) \mid \ms \in \{\ms_1, \ldots, \ms_n\}\}
\]  
Such simple idea has enabled the verification of security protocols, which have infinite search space on ground terms, but finite state space using symbolic terms. 

We formalize this idea by using derivability constraints. Derivability constraints are constructed over minimal sets defined below. 

\begin{definition}
\label{def:minimal}
A set of symbolic messages $\Sscr$ is minimal if it satisfies the following conditions:
\begin{itemize}
  \item $\Sscr$ contains all guessable constants, such as player names and public keys;
  \item $\Sscr$ does not contain tuples;
  \item if $\enc(\ms, \k) \in \Sscr$ if and only if $\k^{-1} \notin \Sscr$ where $\k^{-1}$ is the inverse key of $\k$;
\end{itemize}
Formally, the symbolic terms derivable from $\Sscr$ is the smallest set $\Rscr$ defined inductively as follows:
\begin{itemize}
  \item if $\ms \in \Sscr$ then $\ms \in \Rscr$;
  \item if $\k \in \Rscr$ and $\ms \in \Rscr$, then $\enc(\ms,\k) \in \Rscr$;
  \item if $\ms_1', \ldots, \ms_m' \in \Rscr$, then $\tup{\ms_1',\ldots,\ms_m'} \in \Rscr$;
\end{itemize}
\end{definition}

From two minimal sets, $\Sscr_1$ and $\Sscr_2$, we can construct the minimal set, $\Sscr$, representing the union of $\Sscr_1$ and $\Sscr_2$ by applying the following operations until a fixed point is reached starting from $\Sscr_0 = \Sscr_1 \cup \Sscr_2$:
\begin{itemize}
  \item $\enc(\m,\k)\in \Sscr_i$ and  $\k^{-1} \in \Sscr_i$, then $\Sscr_{i+1} = \Sscr_i \cup \{\m\}$;
  \item $\enc(\m,\k),\k, \k^{-1}\in \Sscr_i$, then $\Sscr_{i+1} = \Sscr_i \setminus\{\enc(\m,\k)\} \cup \{\m\}$;
  \item $\tup{\m_1, \ldots, \m_n} \in \Sscr_i$, then $\Sscr_{i+1} = \Sscr_i \setminus \{\tup{\m_1, \ldots, \m_n}\} \cup \{\m_1, \ldots, \m_n\}$.
\end{itemize}

For example, given the minimal sets:
\[
  \Sscr_1 = \{\symk,\pk\} \textrm{ and } 
  \Sscr_2 = \{\enc(\tup{\enc(\t,\sk), \t},\symk)\}
\]
The minimal set obtained by the union of $\Sscr_1$ and $\Sscr_2$ is:
\[
  \Sscr = \{\symk,\pk,\t,\enc(\t,\sk)\}
\]

We consider two types of constraints on terms: Derivability constraints (Definition~\ref{def:dc}) and comparison constraints (Definition~\ref{def:comparison}). 

\begin{definition}
\label{def:dc}
A derivability constraint has the form $\dc(\sym, \Sscr)$, where $\Sscr$ is minimal. This constraint denotes that $\sym$ can be any (symbolic) term derived from $\Sscr$.
\end{definition}

For example, the  derivability constraint 
\[
 \dc(\sym, \{alice, bob, eve, \pk(alice), \pk(bob),\sk(eve)\}) 
\]
specifies that $\sym$ may be instantiated by, \eg, the terms $\tup{alice, bob}, \tup{alice, \sk(eve)},$ $\enc({alice, \pk(bob)})$, $\enc(\tup{alice, bob}, \pk(bob))$ and so on.

To improve readability (and also reflect our implementation), we will elide in any constraint $\dc(\sym,\Sscr)$ the guessable terms. For example, we write the derivability constraint above simply as $\dc(\sym(1), \{\sk(eve),\})$ as $alice, bob, eve, \pk(alice), \pk(bob)$ are all guessables, namely player names and public keys.

Notice that any $\dc(\sym,\Sscr)$ denotes a infinite number of symbolic terms due to the tupling closure. We will abuse notation and use $\ms \in \dc(\sym, \{\ms_1,\ldots, \ms_n\})$ to denote that the symbolic term $\ms$ is in the set of terms that $\sym$ can be instantiated with. Moreover, we assume that for any given set of derivability constraints $\DC$, there is at most one derivability constraint for any given $\sym$, that is, if $\dc(\sym,\Sscr_1), \dc(\sym,\Sscr_2) \in \DC$, then $\Sscr_1 = \Sscr_2$. We write $\dom(\DC) = \{\sym \mid \dc(\sym,\Sscr) \in \DC\}$. 
We write $\DC(\sym)$ for the derivability constraint for $\sym$ in $\DC$ if it exists. Moreover, we write $\ms \in \DC(\sym)$ if the term can be derived from $\DC(\sym)$.

\begin{definition}
The symbol dependency graph of a given set of derivability constraints $\DC$, written $\Gscr_\DC$, is a directed graph defined as follows:
\begin{itemize}
  \item Its nodes are symbols in $\DC$, that is, $\dom(\DC)$;
  \item It contains the edge $\sym_1 \longrightarrow \sym_2$ if and only if $\dc(\sym_1,\Sscr_1), \dc(\sym_2,\Sscr_2) \in \DC$ and $\Sscr_2$ contains at least one occurrence of $\sym _1$. 
\end{itemize}
\end{definition}

While in general the symbol dependency graph of $\DC$ can be cyclic, our operational semantics will ensure that these graphs are acyclic. 

Consider the following set of derivability constraints:
\[
\DC_0 = \left\{\begin{array}{c}
  \dc(\sym_1, \{\sk(eve)\}),
  \dc(\sym_2, \{\enc(\sym_1, \symk)\}),\\
  \dc(\sym_3, \{\enc(\tup{\sym_2,\sym_1}, \pk(Alice)\}),\\
   \dc(\sym_4, \{\sym_3, \sym_2\})
  \}),
\end{array}\right\} 
\] 
Its dependency graph is the directed acyclic graph (DAG).
\begin{center}
\includegraphics[width=0.25\textwidth]{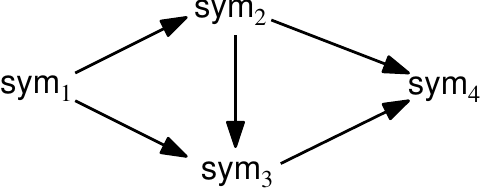}  
\end{center}
Whenever the dependency graph of a set of constraints is a DAG, we classify the set as acyclic. We can compute a topological sort of the DAG in linear time. For example, a topological sort of $\Gscr_{\DC_0}$ is $[\sym_1, \sym_2, \sym_3, \sym_4]$.

Given a set of derivability constraints, we can now formally specify the set of terms that a symbolic term denotes. 

\begin{definition}
\label{def:basic-semantics-terms}
Let $\ms$ be a symbolic term. 
Let $\DC$ be an acyclic set of derivability constraints. 
Assume $\dc(\sym, \Sscr) \in \DC$.
We define the operator $Sub_{\DC}(\sym,\ms)$ as the set of symbolic terms obtained by replacing all occurrences of $\sym$ in $\ms$ by a term $\ms_1 \in \DC(\sym)$. Formally, the set:
\[
\left\{\begin{array}{ll}
  \sigma(\ms) \mid &  \sigma = \{\sym \mapsto \ms_1\}\\
  & \textrm{ is a substitution where $\ms_1 \in \DC(\sym)$}
\end{array}\right\}
\]
Moreover, $Sub_{\DC}(\sym,\Sscr)$ for a set of symbolic terms $\Sscr$ is the set $\bigcup_{\ms \in \Sscr} Sub_\DC(\sym,\ms)$.

Let $\Tscr = [\sym_1,\ldots,\sym_n]$ be any topological sort of the DAG $\Gscr_\DC$. Then the meaning of a symbolic term $\ms$ with respect to $\DC$, written $\DC(\ms)$, is the set obtained by applying $Sub_\DC$ consecutively as follows:

\vspace{-2mm}
\noindent
\begin{small}
\[
  Sub_\DC(\sym_1, Sub_\DC(\sym_2,Sub_\DC(\ldots Sub_\DC(\sym_n,\ms)\cdots))).
\]  
\end{small}
\end{definition}

For example,  $Sub_{\DC_0}(\sym(4),\enc(\sym(4),\pk(bob)))$ is the set of terms:
\[
   \{\enc(\ms_1,\pk(bob)) \mid \ms_1 \in \DC_0(\sym(4))\}
\] 
It contains the terms $\enc(\sym_2,\pk(bob)), \enc(\sym_3,\pk(bob))$, $\enc(\tup{\sym_2,\sym_2} ,\pk(bob))$, $\enc(\tup{\sym_2,\sym_3},\pk(bob)), \ldots$. The set $\DC_0(\enc(\sym_4,\pk(bob)))$ contains the terms 
$\enc(\enc(\sk(eve), \symk),\pk(bob))$, by applying to the term $\enc(\sym_4,\pk(bob))$ the substitution $[\sym_4 \mapsto \sym_2]$ followed by $[\sym_2 \mapsto \enc(\sym_1, \symk)]$  $[\sym_1 \mapsto \sk(eve)]$. 


Notice that for any acyclic set of derivability constraints such that its lowest height symbols (w.r.t. $\Gscr_\DC$) have constraints of the form $\dc(\sym, \Sscr)$ where $\Sscr$ are ground terms, then $\DC(\ms)$ is an (infinite) set of ground terms. This is because the successive application of $Sub_\DC$ will eventually eliminate all symbols.

Given terms $\ms, \ms'$, we describe how to check whether $\ms \in \DC(\ms')$. We first build the matching subsitution $\ssb = \{\sym_1' \mapsto \ms_1, \ldots, \sym_n' \mapsto \ms_n\}$ from symbols in $\ms'$ to (sub)terms in $\ms$. If no such matching subsitution exists, then $\ms \notin \DC(\ms')$. For each $\sym_i'\mapsto \ms_i$, let $\dc(\sym_i, \Sscr_i) \in \DC$. We check whether $\ms_i \in \DC(\sym_i')$ recursively as follows:
\begin{itemize}
  \item If $\ms_i \in \Sscr_i$, return true;
  \item If $\ms_1 = \enc(\ms_2, \ms_3)$, then we check whether $\tup{\ms_2, \ms_3} \in \DC(\sym_i')$;
  \item if $\ms_i = \{\ms_i^1,\ldots, \ms_i^m\}$, then for each $1 \leq j \leq m$, we check whether $\ms_i^j \in \DC(\sym_i')$.
\end{itemize}

\begin{definition}
$\ssb \vDash \DC$ if for each $\sym \mapsto \ms \in \ssb$, $\ms \in \DC(\sym)$.
\end{definition}

The following definitions specify the second type of term constraints called comparison constraints.

\begin{definition}
\label{def:comparison}
  A comparison constraint is either an equality constraint of the form $\Eq(\ms_1,\ms_2)$ or an inequality constraint of the form $\Neq(\ms_1,\ms_2)$. 
\end{definition}

A set $\EQ$ of comparison constraints should be interpreted as a conjunction of constraints. The following definition specifies when it is satisfiable.

\begin{definition}
  Let $\DC$ be a set of derivability constraints and $\EQ$ be a set of comparison constraints. The set $\EQ$ is satisfiable w.r.t. $\DC$, written $\DC \vDash \EQ$, if there is a subsitution $\sigma = \{\sym_1 \mapsto \m_1, \ldots, \sym_n \mapsto \m_n\}$ mapping all symbols $\sym_i$ in $\EQ$ to ground terms in $\DC(\sym_i)$, such that:
  \begin{itemize}
    \item for all equality constraints $\Eq(\ms,\ms') \in \EQ$, $\sigma[\ms] = \sigma[\ms']$;
    \item for all inequality constraints $\Neq(\ms,\ms') \in \EQ$, $\sigma[\ms] \neq \sigma[\ms']$.
  \end{itemize}
\end{definition}

We define the procedure below, $\eqCheck$, for checking whether a set of comparison constraints $\EQ$ is satisfiable. 

\begin{definition}
\label{def:eqCheck}
  Let $\EQ$ be a (finite) set of comparison constraints and $\DC$ a set of derivability constraints. Let $\Eq(\ms_1,\ms_1'), \Eq(\ms_2,\ms_2') \ldots \Eq(\ms_n,\ms_n')$ be all the equality constraints in $\EQ$. Then $\eqCheck(\EQ,\DC)$ is true if and only if
  \begin{enumerate}
    \item There is a unifer $\ssb$ of the terms $\alpha = \tup{\ms_1, \ldots, \ms_n}$ and $\beta = \tup{\ms_1', \ldots, \ms_n'}$ mapping symbols to symbolic terms, that is, $\ssb(\alpha) = \ssb(\beta)$;

    \item For all inequality constraint $\Neq(\ms,\ms') \in \EQ$, $\ssb[\ms] \neq \ssb[\ms']$;

    \item $\ssb$ is consistent with $\DC$ (as done in Section~\ref{subsec:basic-solving}).
  \end{enumerate}
\end{definition}

\begin{lemma}
\label{lem:eqCheck}
  $\DC \vDash \EQ$ if and only if $\eqCheck(\EQ,\DC)$.
\end{lemma}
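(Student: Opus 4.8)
The plan is to prove the biconditional $\DC \vDash \EQ \iff \eqCheck(\EQ,\DC)$ by establishing the two directions separately, with the key conceptual point being that a \emph{symbolic} unifier captured by $\eqCheck$ can always be refined to a \emph{ground} witnessing substitution of the kind demanded by $\DC \vDash \EQ$, and conversely that any ground witness factors through a most general symbolic unifier.

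For the direction $\eqCheck(\EQ,\DC) \Rightarrow \DC \vDash \EQ$: suppose $\ssb$ is the unifier of $\alpha = \tup{\ms_1,\ldots,\ms_n}$ and $\beta = \tup{\ms_1',\ldots,\ms_n'}$ from clause (1), satisfying the inequality side-conditions of clause (2) and consistent with $\DC$ by clause (3). I would first compose $\ssb$ with the original constraints; because $\ssb$ is consistent with $\DC$, each symbol $\sym$ appearing in $\ssb$ is mapped to a symbolic term still derivable in $\DC$, so (using acyclicity of $\Gscr_\DC$ and the closing remark in the excerpt that the lowest-height symbols have ground-term constraints) I can extend $\ssb$ to a ground substitution $\sigma$ by instantiating every remaining symbol, following a topological sort of $\Gscr_\DC$, with some ground term of $\DC(\sym)$. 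Such ground terms exist precisely because $\DC(\ms)$ is a nonempty (indeed infinite) set of ground terms. Since $\ssb$ unified $\alpha$ and $\beta$, we get $\sigma[\ms_i] = \sigma[\ms_i']$ for every equality constraint. For the inequality constraints, the only subtlety is that $\ssb[\ms] \neq \ssb[\ms']$ syntactically for symbolic terms need not immediately give $\sigma[\ms] \neq \sigma[\ms']$; here I would exploit the infiniteness of the derivation sets to choose the ground instantiations of "free" symbols generically (distinct fresh nonces / text constants for distinct symbols), so that no accidental ground collision is created — this is where the argument needs care.

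For the converse $\DC \vDash \EQ \Rightarrow \eqCheck(\EQ,\DC)$: let $\sigma$ be the ground witnessing substitution. Then $\sigma$ in particular equates $\ms_i$ and $\ms_i'$ for all equality constraints, hence it is a (ground) unifier of $\alpha$ and $\beta$; by the standard existence of a most general unifier on the term algebra, there is a symbolic mgu $\ssb$ with $\sigma = \tau \circ \ssb$ for some substitution $\tau$. Clause (1) of $\eqCheck$ is then witnessed by $\ssb$. Clause (2): since $\sigma$ falsifies no inequality and $\sigma$ factors through $\ssb$, if we had $\ssb[\ms] = \ssb[\ms']$ then $\sigma[\ms] = \sigma[\ms']$, contradicting $\Neq(\ms,\ms') \in \EQ$; so $\ssb[\ms] \neq \ssb[\ms']$. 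Clause (3): $\sigma$ maps each $\sym$ into $\DC(\sym)$, and since $\ssb$ is a "coarsening" of $\sigma$ (its images are obtained from $\sigma$'s by abstracting subterms), the derivability closure properties of $\DC$ — specifically that $\DC(\sym)$ is closed under the subterm/tupling/encryption operations used in the recursive membership check described just before Definition~\ref{def:eqCheck} — yield that $\ssb$ is consistent with $\DC$ in the sense of Section~\ref{subsec:basic-solving}.

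The main obstacle I anticipate is the interaction between the inequality constraints and the passage between symbolic and ground substitutions: $\eqCheck$ only checks $\ssb[\ms] \neq \ssb[\ms']$ at the symbolic level, whereas $\DC \vDash \EQ$ demands a single ground $\sigma$ simultaneously satisfying all equalities and all inequalities, and one must rule out that instantiating the symbols left free by the mgu forces two a priori distinct symbolic terms to become equal as ground terms. The resolution relies essentially on the derivability sets being infinite (the tupling closure noted after Definition~\ref{def:dc}), which guarantees enough room to pick "sufficiently generic" ground instances; I would state this as a small lemma (a free symbol over an infinite derivation set can be instantiated to avoid any finite set of forbidden equalities) and invoke it. A secondary, more routine obstacle is making precise the notion of $\ssb$ being "consistent with $\DC$" and checking it is preserved under the mgu factorization — but this is just unfolding the recursive membership procedure given in the excerpt, so I expect it to go through mechanically once the generic-instantiation lemma is in place.
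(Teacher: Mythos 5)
The paper never actually proves this lemma --- it is not among the proofs given in the appendix --- so there is no official argument to compare yours against; I can only assess the plan on its own terms. The plan (factor any ground witness through an mgu to get soundness of $\eqCheck$; ground the symbolic unifier ``generically'' to get completeness) is the natural one and I believe it goes through. You have also correctly localized the only real difficulty: clause (2) of Definition~\ref{def:eqCheck} is a \emph{syntactic} disequality of symbolic terms, and converting it into a single ground substitution that simultaneously respects all $\Neq$-constraints depends on every $\DC(\sym)$ being infinite, which the paper guarantees via tupling closure. Your proposed auxiliary lemma is indeed provable --- for instance by assigning to the free symbols ground terms of rapidly increasing sizes drawn from their derivation sets, so that no binding $\sym \mapsto u$ with $u \neq \sym$ occurring in any of the finitely many unifiers of the pairs $(\ssb[\ms],\ssb[\ms'])$ can be realized --- but as written it is an IOU, and it is the heart of the completeness direction.

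Two places where the argument is thinner than it should be. First, in the direction $\eqCheck(\EQ,\DC) \Rightarrow \DC \vDash \EQ$ you need more than ``each symbol in $\ssb$ is mapped to a term still derivable in $\DC$'': you need that if $\sym \mapsto \ms \in \ssb$ passes the consistency check of clause (3) and $\tau$ is a $\DC$-respecting grounding of the symbols occurring in $\ms$, then $\tau[\ms] \in \DC(\sym)$. That is exactly what makes $\sigma = \tau \circ \ssb$ a legitimate witness (recall the definition of $\DC \vDash \EQ$ requires each symbol of $\EQ$ to be sent to a \emph{ground} term of its own derivation set), and it must be discharged against the recursive membership procedure rather than folded into a parenthetical. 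Second, and symmetrically, your converse-direction claim that the mgu inherits consistency from the ground witness because it is a ``coarsening'' of $\sigma$ is only as strong as the closure properties of that same procedure; since the paper leaves ``consistent with $\DC$'' under-specified (the referenced solving subsection is not reproduced), you should state explicitly the closure property you are assuming. Neither point breaks the approach --- they are where the actual content of clause (3), and hence of the lemma, lives --- but without them the proof is a sound outline rather than a proof.
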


Moreover, the meaning of a symbolic term should take comparison constraints $\EQ$ into account. That is, it should not be possible to replace a symbol by a term that falsifies some comparison constraint. We extend Definition~\ref{def:basic-semantics-terms} accordingly.

\begin{definition}
Let $\DC$ be an acyclic set of derivability constraints and $\EQ$ a set of comparison constraints. The meaning of a symbolic term $\ms$ w.r.t. $\DC$ and $\EQ$, written $\DC(\ms)|_\EQ$, is the set of terms $\ms' \in \DC(\ms)$ such that there exists a matching substitution $\theta$:
\begin{itemize}
  \item $\theta(\ms) = \ms'$;
  \item For all equality constraints $\Eq(\ms_1,\ms_2) \in \EQ$, $\theta(\ms_1) = \theta(\ms_2)$;
  \item For all inequality constraints $\Neq(\ms_1,\ms_2) \in \EQ$, $\theta(\ms_1) \neq \theta(\ms_2)$.
\end{itemize}  
\end{definition}

For example, $\dc(\sym_1, \{\t_1\}),\dc(\sym_2, \{\t_2\}) \in \DC$ and a set of a single comparison constraint $\EQ = \{\Eq(\sym_1, \sym_2)\}$. The term $\tup{\t_1, \t_2} \in \DC(\tup{\sym_1,\sym_2})$, but $\tup{\t_1, \t_2} \notin \DC(\tup{\sym_1,\sym_2})|_\EQ$. This is because the matching substitution $\theta = \{\sym_1 \mapsto \t_1, \sym_2 \mapsto \t_2\}$ turns the constraint $\Eq(\sym_1, \sym_2)$ false: $\t_1 \neq \t_2$.

\subsection{Symbolic Time Constraints}
Assume a time signature $\Xi$ which is disjoint to the message alphabet $\Sigma$. It contains numbers (real and natural), variables and pre-defined functions. 
\omitthis{In principle, we can use any function supported by the underlying SMT-solver. [clt: redundant]}
\[
\begin{array}{l@{~}l}
  \r_1,\r_2,\ldots & \textrm{A set of numbers;}\\
  \tv_1,\tv_2,\ldots,& \textrm{A set of time variables};\\
  & \textrm{including the special variable $\cur$}\\
  +,-,\times,/, \floor,\ceiling, \ldots & \textrm{A set of pre-defined functions.}
\end{array}  
\]
\emph{Time Expressions} are constructed inductively by applying arithmetic symbols to time expressions. For example $\ceiling((2 + \tv + \cur )/ 10)$ is a Time Expression. The symbols $\tr_1,\tr_2, \ldots$ range over Time Expressions.
We do not constrain the set of numbers and function symbols in $\Xi$. However, in practice, we allow only the symbols supported by the SMT solver used. All examples in this paper will contain SMT supported symbols (or equivalent). Finally, the time variable $\cur$ will be a keyword in our protocol specification language denoting the current global time.

\begin{definition}[Symbolic Time Constraints] Let $\Xi$ be a time signature.
The set of symbolic time constraints is constructed inductively using time expressions as follows: 
Let  
 $\tr_1,\tr_2$ be time expressions, then 
\[
\begin{array}{l}
  \tr_1 = \tr_2, \quad \tr_1 \geq \tr_2 \quad \tr_1 > \tr_2, \quad\tr_1 < \tr_2, \textrm{ and } \tr_1 \leq \tr_2
\end{array}
\]
are Symbolic Time Constraints.
\end{definition}
For example, $\cur + 10 < \floor(\tv - 5)$ is a Time Constraint. 
Time Constraints will range over $\tc, \tc_1, \tc_2, \ldots$.

Intutively, given a set of time constraints $\TC$, each of its models with concrete instantiations for the time variables corresponds to a particular scenario. This means that one single set of time constraints denotes a possibly infinite number of concrete scenarios. For example, the set of constraints $\{\tv_1 \leq 2, \tv_2 \geq 1 + \tv_1\}$ has an infinite number of models, \eg, $[\tv_1 \mapsto 2.1, \tv_2 \mapsto 3.1415]$.

Finally, SMT-solvers, such as CVC4~\cite{barrett11cvc4} and Yices~\cite{duterte15smt}, can check for the satisfiability of a set of time constraints.  

\subsection{Symbolic Constraint Solving}
\label{subsec:basic-solving}
For protocol verification, we will assume a traditional Dolev-Yao intruder~\cite{DY}, that is, an intruder that can construct messages from his knowledge by tupling and encrypting messages. However, he cannot decrypt a message for which he does not possess the inverse key. This is captured by the definition of minimal sets Definition~\ref{def:minimal}.

\begin{definition}
  An intruder knowledge $\IK$ is a minimal set of symbolic terms. 
\end{definition}

During protocol execution, the intruder sends messages to  honest participants constructed from his knowledge base. 
Suppose an honest player is ready to receive a message
matching a term $\m$, possibly containing variables.
Rather than considering all possible ground instances of $\m$ that the intruder could send, we consider
a finite representation of thie set, namely symbolic messages where the possible values of the symbols are constrained  by derivability constraints.  
To compute this this representation the intruder replaces variables with symbolic terms, possibly containing fresh symbols, and then constrains the symbols so that the
allowed instances are exactly the terms matching
$\m$ that the intruder can derive from his current
knowledge $\IK$. 

For example, consider the term $\m = \enc(\{\v_1,\sym,\v_1,\v_2\},\k)$ (which is expected as input by an honest player). Here $\v_1$ and $\v_2$ are variables and $\sym$ is constrained by derivability constraints $\DC$. We create two fresh symbols $\sym_1$ and $\sym_2$ for, respectively, the variables $\v_1$ and $\v_2$. We use  $\sb$  to denote such substitution of variables by symbolic terms. In this example $\sb = [\v_1 \mapsto \sym_1, \v_2 \mapsto \sym_2]$. We then obtain $\ms = \sb[\m] = \enc(\{\sym_1,\sym,\sym_1,\sym_2\},\k)$.


It remains to solve the following problem:

\begin{center}
\emph{Given an intruder knowledge, $\IK$, and a set of derivability constraints $\DC$ constraining the symbols in $\IK$, find
a representation of all instances of a symbolic term $\ms$, satisfying $\DC$, that can be generated from $\IK$.} 
\end{center}

We implemented the function called $\sgen$ that enumerates all possible instances. Its specification is in the Appendix. We describe $\sgen$ informally next and illustrate it with some examples. A similar algorithm is also used by~\cite{cortier09csf}.

In particular, $\sgen(\m, \IK,\DC)$ takes as input a term $\m$, which is expected by the honest participant, the intruder knowledge $\IK$ and the derivability constraints $\DC$ for the existing symbols. $\sgen(\m, \IK,\DC)$ then generates as output a pair:
\[
  \{\sb, \{\ssb_1, \DC_1\} \ldots \{\ssb_k, \DC_k\}\}
\]
where $\sb$ maps the variables of $\m$ to symbols, and each $\{\ssb_i,\DC_i\}$ is a solution to the problem above for $\ms = \sb[\m]$. If $k = 0$, then there are no solutions, that is, the intruder is not able to generate a term which matches $\m$. 

Intuitively, the function $\sgen$ constructs a solution by either matching $\m$ with a term in his knowledge $\IK$ (base case) or constructing $\m$ from terms in $\IK$ and using tupling and encryption. The following examples illustrates the different cases involved:

\begin{example}
Consider the following cases for deriving the term $\m = \enc(\{\v,\sym\},\k)$. 
\begin{itemize}
  \item Case 1 (matching with a term in $\IK$): Assume:
  \[
\begin{array}{l}
  \IK = \{\enc(na,\sym_1),\k)\}\\
  \DC = \dc(\sym, \{\n_a,\n_c\})\ \dc(\sym_1, \Sscr)
\end{array}
\]
Then the solution of $\sgen$ is: 
\[
 \{\sb, \{[\sym_\v \mapsto \n_a, \sym \mapsto \sym_1], \dc(\sym_1, \add(\{na,nc\},\Sscr) )\}\} 
\]
where $\sb = [\v \mapsto \sym_\v]$ and $\sym_\v$ is a fresh symbol. Notice that since $\sym_\v$ is mapped to a particular term ($\n_a$), no derivability constraint for it is generated. Additionally, notice that $\sym$ is constrained to be the same as $\sym_1$. This causes the removal of the derivability constraint $\dc(\sym_1, \Sscr)$;

  \item Case 2 (constructing terms from $\IK$): Assume that $\k \in \IK$ and $\IK$ has no encryption term. Then the solution of $\sgen$ is:
\[
  \{[\v \mapsto \sym_\v], \{[], \DC\}\}
\]  
which corresponds to generatign the term $\enc(\{\sym_v,\sym\},\k)$.

\item Case 3 [No Solution]: Assume that  $\IK = \{\enc(\n_a,\n_b),\k)\}$ and $\DC = \dc(\sym, \{na,nc\})$. Since $\sym$ cannot be instantiated to $\n_b$, the intruder cannot use the term $\enc(\n_a,\n_b),\k)$.
\end{itemize}




\end{example}

\section{Timed Protocol Language}
\label{subsec:basic-symb-op-sem}

The language used to specify a cryptographic protocol has the standard constructions, such as the creation of fresh values, sending and receiving messages. Moreover, it also includes ``if then else'' constructors needed to specify, for example, the RFID protocol used by passports. A protocol is composed of a set of roles.

\begin{definition} [Timed Protocols]
The set of Timed Protocols, $\TL$, is composed of Timed Protocol Roles, $\tl$, which are constructed by using commands as specified by the following grammar:
\[
  \begin{array}{l@{\qquad}l}
     \mathsf{nil} & \textrm{Empty Protocol}\\
     \mid (\new~ \v~\#~ \tc), \tl & \textrm{Fresh Constant}\\
     \mid (+ \m ~\#~ \tc), \tl &  \textrm{Timed Message Output}\\
     \mid (- \m ~\#~ \tc), \tl & \textrm{Timed Message Input}\\
     \mid (\mathsf{if}~{(\m_1 := \m_2)~\#~\tc}& \textrm{Timed Conditional}\\
     ~~\mathsf{then}~{\tl_1}~\mathsf{else}~{\tl_2}) 
  \end{array}
\]
\end{definition}
Intuitively, $\new$ generates a fresh value binding it to the variable $\v$, $(+ \m~\#~ \tc)$ denotes sending the term $\m$ and the $(- \m~\#~ \tc)$ receiving a term, and $(\myif{\m_1 := \m_2~\#~ \tc}{\mathsf{pl}_1}{\mathsf{pl}_2})$ denotes that if $\m_1$ can be matched with $\ms_2$, that is, instantiate the variables in $\m_1$ so that the resulting term is $\ms_2$, then the protocol proceeds by execution $\mathsf{pl}_1$ and otherwise to $\mathsf{pl}_2$. A command is only applicable if the associated constraint $\tc$ is satisfiable. We elide the associated time constraint whenever $\tc$ is a tautology, that is, it is always true.

\begin{example}
\label{ex:ns}
The Needham-Schroeder~\cite{ns} protocol is specified as follows where $X,Y$ are variables:
\[
\begin{array}{l}
 Alice :=  (\new ~N_a), (+ \enc(\tup{N_a,alice},\pk(Z))),
 \\ \qquad (- \enc(\tup{N_a,Y},\pk(alice)\})), (+ \enc(Y,\pk(Z)))\\[2pt]
 Bob :=  (- \enc(\tup{X,Z},\pk(bob))), (\new ~N_b),\\
 \qquad  (+ \enc(\tup{X,N_b},\pk(Z)\})), (- \enc(N_b,\pk(bob)))  
\end{array}
\]
\end{example}

\begin{example}
\label{ex:ns2}
  Consider the following protocol role which is a modification of Alice's role in the Needham-Schroeder's protocol (Example~\ref{ex:ns}):
\[
\begin{array}{ll}
 Alice := & (\new ~N_a),  (+\enc(\tup{N_a,alice},\pk(Z))), (- \v),\\ 
 & \quad \mathsf{if}~\v := \enc(\tup{N_a,Y},\pk(alice)\}) \\
& \quad\mathsf{then}~ (+ \enc(Y,\pk(Z)))\\
& \quad\mathsf{else}~(+ error)
\\[2pt]
\end{array}
\]
Here, Alice checks whether the received message $\v$ has the expected shape before proceeding. If it does not have this shape, then she sends an error message.
\end{example}

\begin{example}
\label{ex:distance-bounding}
  The following role specifies the verifier of a (very simple) distance bounding protocol~\cite{brands93eurocrypt}:
  \[
  \begin{array}{l}
   (\new~\v),(+ \v~\#~\tv = \cur), (- \v~\#~\cur \leq \tv + 4) 
  \end{array}    
  \]
  It creates a fresh constant and sends it to the prover, 
  remembering the current global time by assigning it to the time variable $\tv$. Finally, when it receives the response $\v$ it checks whether the current time is less than $\tv + 4$.
\end{example}

\begin{example}[Passport]
\label{ex:passport}
Timed conditionals can be used to specify the duration of operations, such as checking whether some message is of a given form. In practice, the duration of these operations can be measured empirically to obtain a finer analysis of the protocol~\cite{chothia10fc}.

For example, consider the following protocol role: 
\[
\begin{array}{l}
  (\new~\v),(+ \v),(- \{\v_{enc},\v_{mac}\}~\#~\tv_0 = \cur),\\
  \key{if}~ (\v_{mac} := \enc(\v_{enc},\k_M))~ \# ~ \tv_1 = \tv_0 + \tv_{Mac}\\
  \key{then}~ (\key{if}~(\v_{enc} := \enc(\v,\k_E)) ~\# ~ \tv_2 = \tv_1 + \tv_{Enc})\\
  \qquad ~\key{then} ~ (+ done ~\#~ \cur = \tv_2)~\key{else}~ (+ error ~\#~ \cur = \tv_2))\\
  \key{else}~(+ error ~\#~ \cur = \tv_1)
\end{array}  
\]
This role creates a fresh value $\v$ and sends it. Then it is expecting a pair of two messages $\v_{mac}$ and $\v_{enc}$, remembering at time variable $\tv_0$ when this message is received. It then checks whether the first component $\v_{mac}$ is of the form $\enc(\v_{enc},\k_M))$, \ie, it is the correct MAC. This operation  takes $\tv_{mac}$ time units. The time variable $\tv_1$ is equal to the time $\tv_0 + \tv_{mac}$, \ie, the time when the message was received plus the MAC check duration. If the MAC is not correct, an $error$ message is sent exactly at time $\tv_1$. Otherwise, if the first component, $\v_{MAC}$, is as expected, the role checks whether the second component, $\v_{enc}$, is an encryption of the form $\enc(\v,\k_E))$, which takes (a longer) time $\tv_{enc}$. If so it sends the $done$ message, otherwise the $error$ message, both at time $\tv_2$ which is $\tv_1 + \tv_{enc}$.
\end{example}

\begin{example}[Red Pill Example]
\label{ex:red-pill}
  We abstract the part of sending the baseline message, \eg, the messages that establish the connection to the server, and the part that sends the differential messages. We assume that it takes \key{dBase} to complete the exchange of the baseline messages.
\[
  \begin{array}{l}
(-(\key{baseline\_req})~ \#~ \tv_0 = \cur),\\
(+(\key{baseline\_done})~ \#~ \cur = \tv_0 + \key{dBase}),\\
(-(\key{diff\_req})~ \#~ \tv_1 = \cur )\\
(+ (\key{diff\_done})~ \#~ \cur = \tv_1 + \key{dAppl})
  \end{array}
\]
Then the part of the protocol that depends on the application starts. We abstract this part using the messages \key{diff\_req} and \key{diff\_done}. If the application is running over a virtual machine, then \key{dAppl} takes \key{dVirtual} time units; otherwise \key{dAppl} takes \key{dReal} time units, where \texttt{dVirtual > dReal}. 

The intruder can distinguish whether an application is running over a virtual machine or not by measuring the time it takes to complete the exchange of \key{diff\_req} and \key{diff\_done} messages. 
\end{example}

\begin{example}[Anonymous Protocol]
\label{ex:anonymous}
  We specify (a simplified version of the) anonymous group protocol proposed by Abadi and Fournet for private authentication~\cite{abadi04tcs}. Whenever a broadcasted message is received by an agent, it checks whether it has been encrypted with the group key $KB$. If this is the case, then it checks whether the player sending the message with key $\v_G$ is part of the group. If so, then it sends a response encrypted with his private key. Otherwise, he sends a decoy message. 
\[
  \begin{array}{l}
  -(\{\key{hello}), \enc(\{\key{hello},\v_n,\v_G,\},\k_G) \} ~\#~ \tv_0 = \cur\\
  \key{if}~\k_G: = KB ~\#~ \tv_1 = \tv_0 + \key{dEnc}\\
  \key{then} \\
  \quad \key{if}~\v_G: = KA ~\#~ \tv_2 = \tv_1 + \key{dCheck}\\
  \quad \key{then}~+(\{\key{ack}, \enc(\key{rsp},\k_B)\}) ~\#~ \cur = \tv_1 + \key{dCreate}\\
  \quad \key{else}~+(\{\key{ack}, \enc(\key{decoy},\k_B)\}) ~\#~ \cur = \tv_1\\
  \key{else}~+(\{\key{ack}, \enc(\key{decoy},\k_B)\}) ~\#~ \cur = \tv_1\\
  \end{array}
\]
Notice the use of time constraints to capture that the steps of the protocol take some time, namely \key{dCheck} and \key{dCreate}. 
\end{example}

\subsection{Operational Semantics for Timed Protocols}


The operational semantics of timed protocols is given in Figure~\ref{fig:os-branching}. The rewrite rules are rewrite configurations defined below:

\begin{definition}
\label{def:basic-configuration}
A symbolic term configuration has the  form $\tup{\Pscr, \IK, \DC, \EQ, \TC}@\tG$, where 
\begin{itemize}
  \item $\Pscr$ is a set of player roles of the form $[n \mid \pl \mid \ks]$ composed by an identifier, $n$, a protocol $\pl$, and a set of known keys $\ks$;
  \item $\IK$ is the intruder knowledge;
  \item $\DC$ is a set of derivability constraints;
  \item $\EQ$ is a set of comparison constraints;
  \item $\TC$ is a set of time constraints;
  \item $\tG$ is a time symbol representing global time.
\end{itemize}
\end{definition}

The operational semantics of timed protocols is defined in Figure~\ref{fig:os-timed}. The \textbf{New} rule replaces the (bound) variable $\v$ by a fresh nonce $\n^\nu$. The \textbf{Send} rule sends a message $\ms$ which is then added to the intruder knowledge. The \textbf{Receive} rule expects a term of the form $\m$. The function $\sgen(\m, \IK, \DC)$ returns the variable substitution $\sb$ and a set of solutions $\{\ssb,\DC_1\}~ \css$. Each solution intuitively generates a different trace. We apply $\sb$ in the remaining of the program $\pl$ and apply the symbol substitution $\ssb$ to all symbols in the resulting configuration. This rule also has a proviso that the message $\ms = \ssb[\sb[\m]]$ is encrypted with keys that can be decrypted by the honest participant. This is specified by the function $isReceivable$. Finally, it also adds to the set of keys of the honest participant $\ks$, the keys he can learn from the message $\ms$. The rule \textbf{If-true} checks whether the terms $\m_1$ and $\ms_1$ can be matched from the intruder knowledge $\IK$. This is done by the function $\sgenB$ which is defined in a similar fashion as $\sgen$. It then adds the equality constraint to the set of comparison constraints. Finally, the rule \textbf{If-false} replaces the variables in $\m$ by fresh symbols, constrained in $\DC'$ with the intruder knowlegde. That is if $\sym^\nu$ is a fresh symbol, then $\dc(\sym^\nu,\IK)$. It also adds the corresponding inequality constraint. The intuition of replacing variables in $\m_1$ by fresh symbols is to specify that for any instance of these variables, the resulting term cannot be matched with $\ms_2$ as specifies the inequality constraint.

\begin{example}
\label{ex:lowe}
Consider the Needham-Schroeder protocol in Example~\ref{ex:ns}. Assume that the intruder initially only knows his secret key (and the guessables), $\IK_0 = \{\sk(eve)\}$ and there are no symbols $\DC = \emptyset$. An execution of Alice's protocol role is as follows. 
Alice creates a fresh constant $N_a$ and sends the message $\enc(\tup{N_a,alice},\pk(eve))$. At this point, the intruder knowledge is:
\[
  \IK_1 = \IK_0 \cup \{N_a\}
\]
He now can send a message to $Bob$, namely $\enc(\tup{\sym_1,\sym_2},\pk(bob))$ where $\sym_1,\sym_2$ are fresh and constrained $\DC_1 = \{\dc(\sym_1, \IK_1), \dc(\sym_2, \IK_1)\}$. At this point, Bob creates a fresh value $N_b$ and sends the message $\enc(\tup{\sym_1,N_b},\pk(\sym_2)\})$. The intruder learns this message (and no further):
\[
  \IK_2 = \IK_1 \cup \{\enc(\tup{\sym_1,N_b},\pk(\sym_2)\})\}
\]
Now, the intruder can fool alice by sending her a message of the form $\enc(\tup{N_a,Y},\pk(alice)\})$. We create a fresh symbol $\sym_3$ for $Y$ obtaining $\enc(\tup{N_a,\sym_3},\pk(alice)\})$ and attempt to generate this message from $\IK_2$ using $\sgen$. Indeed we can generate this message using $\enc(\tup{\sym_1,N_b},\pk(\sym_2)\}) \in \IK_2$. This generates the $\ssb = [\sym_1 \mapsto N_a, \sym_2 \mapsto alice, \sym_3 \mapsto N_b]$. This substitution is consistent with $\DC_1$. Notice that $\sym_3$ is not constrained. The protocol finishes by the intruder simply forwarding the message send by alice to bob. Bob then thinks he is communicating with alice, but he is not.  
\end{example}

Each rule has two general provisos. The first is that the resulting set of comparison constraints should be consistent. This can be checked as defined in Definition~\ref{def:eqCheck}. 

The second, more interesting, condition is on the time symbols. Whenever a rule is applied, time constraints $\TC_1$ are added to the configuration's constraint set. These time constraints are obtained by replacing $\cur$ in $\tc$ with $\tG_1$ together with the constraint $\tG_1 \geq \tG_0$ specifying that time can only advance. The rule is fired only if the resulting set of time constraints ($\TC \cup \TC_1$) is consistent, which can be done using SMT solver. This way of specifying systems is called Rewriting Modulo SMT~\cite{rocha-2012}.

\begin{figure*}[t]
  \[
  \begin{array}{l}
  \textbf{New:~}\tup{[n\mid (\new~ \v~\#~ \tc), \mathsf{pl} \mid \ks]~\Pscr,\IK,\DC,\EQ,\TC}@\tG_0 \lra   \tup{[n\mid \sb[\mathsf{pl}] \mid \ks]~ \Pscr,\IK,\DC,\EQ,\TC_1}@\tG_1\\
\textrm{where $\n^\nu$ is a fresh nonce and $\sb = [\v \mapsto \n^\nu]$}\\[3pt]

    \textbf{Send:~}\tup{[n\mid (+ \ms ~\#~ \tc), \mathsf{pl} \mid \ks]~\Pscr, \IK, \DC, \EQ, \TC}@\tG_0 \lra  \tup{[n\mid \mathsf{pl} \mid \ks]~ \Pscr, \IK \cup\{\ms\}, \DC, \EQ, \TC_1}@\tG_1\\[3pt]

\textbf{Receive:~}\tup{[n\mid (-\m ~\#~ \tc), \mathsf{pl} \mid \ks]~\Pscr, 
      \IK, \DC, \EQ, \TC}@\tG_0 \lra \\
      \qquad \qquad \qquad \ssb[\tup{[n\mid \sb[\mathsf{pl}] \mid \addKeys(\ms,\ks)]~ \Pscr, \IK, \DC_1, \EQ,\TC_1}]@\tG_1\\[1pt]
    \textrm{where~} 
    \{\sb,\{\ssb,\DC_1\}~ \css \} := \sgen(\m,\IK,\DC)\textrm{ and }\ms = \ssb[\sb[\m]]\textrm{ and } isReceivable(\ms,\ks) \\[3pt]

\textbf{If-true:~}
   \tup{[n\mid (\myif{(\m_1 := \ms_2~\#~ \tc) }{\mathsf{pl}_1}{\mathsf{pl}_2}) \mid \ks]~\Pscr,\IK, \DC, \EQ,\TC}@\tG_0 \lra \\
      \qquad \qquad \qquad 
      \ssb[\tup{[n\mid \sb[\mathsf{pl_1}] \mid \ks~ \Pscr,\IK, \DC_1, \EQ \cup \{\Eq(\sb[\m_1], \sb[\ms_2])\},\TC_1}]@\tG_1\\[1pt]
    \textrm{where~} 
    \{\sb,\{\ssb,\DC_1\}~ \css \} := \sgenB(\m_1 = \ms_2,\IK,\DC)\\[3pt]

\textbf{If-false:~}
   \tup{[n\mid (\myif{\m_1 := \ms_2}{\mathsf{pl}_1}{\mathsf{pl}_2}) \mid \ks]~\Pscr, \IK, \DC, \EQ} \lra \\
      \qquad \qquad \tup{[n\mid \sb[\mathsf{pl_2}] \mid \ks]~\Pscr,\IK, \DC \cup \DC', \EQ \cup\{\Neq(\sb[\m_1],\ms_2)\}} \\[1pt]
          \textrm{where~$\sb$ replaces the variables in $\m_1$ by fresh symbols which are constrained with $\IK$ in $\DC'$}\\[1pt]

  \end{array}
  \]
  \caption{Operational semantics for basic protocols. Here $\sb$ is a substitution mapping the variables in $\m$ by fresh symbols; and the function $rng$ applies the symbol substitution $\ssb$ to the range of the variable substitution $\sb$; $\tc_1$ is the time constraint obtained by replacing $\cur$ in $\tc$ by the global time $\tG_1$; and $\TC_1 = \TC\cup\{\tG_1 \geq \tG_0,\tc_1\}$. The function isReceivable checks whether the message $\ssb[\sb[\m]]$ can be decrypted with the keys he has in $\ks$. Every rule has the proviso that the set of comparison constraints and the set of time constraints should be satisfiable.
  Rules are only applicable if the set of time constraints are consistent.}
  \label{fig:os-timed}
  \label{fig:os-basic}
  \label{fig:os-branching}
\end{figure*}

\begin{definition}
\label{def:basic-traces}
Let $\Rscr$ be the set of rules in Figure~\ref{fig:os-timed}. 
  A timed trace is a labeled sequence of transitions written $\Cscr_1 \stackrel{l_1}{\lra} \Cscr_2 \stackrel{l_2}{\lra} \cdots \stackrel{l_{n-1}}{\lra} \Cscr_n$ such that for all $1 \leq i \leq n-1$, $\Cscr_i \lra \Cscr_{i+1}$ is an instance of a rule in $\Rscr$ and $l_i$ is $+\ms @ \tG_1$ if it is an instance of Send rule sending term $\ms$ at time $\tG_1$, $-\ms @ \tG_1$ if it is an instance of Receive rule receiving term $\ms$  at time $\tG_1$, and $\emptyset$ otherwise.
\end{definition}

The use of rewriting modulo SMT considerably reduces the search space.  Timed protocols are infinite state systems, as time symbols can be instantiated by any (positive) real number. With the use of rewriting modulo SMT we simply have to accumulate constraints. Only traces with satisfiable sets of time constraints are allowed. Indeed, as we describe in Section~\ref{sec:exp}, the number of traces is not only finite (as stated in the following Proposition), but very low (less than 40 traces). As observational equivalence involves the matching of traces, checking for observational equivalence can be automated.

\begin{proposition}
\label{prop:finite-traces}
  The set of traces starting from any configuration $\Cscr_0$ is finite.
\end{proposition}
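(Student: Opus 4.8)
The plan is to show that the tree of all timed traces rooted at $\Cscr_0$ is finitely branching and of bounded depth, so that by König's lemma it is finite. The depth bound is the easy half. Each configuration carries a finite multiset of player roles $\Pscr$, and every rule in Figure~\ref{fig:os-timed} (New, Send, Receive, If-true, If-false) consumes exactly one command prefix from exactly one role $[n \mid \pl \mid \ks]$, replacing $\pl$ by a strict syntactic subterm of itself (the continuation, or one of the two branches of a conditional). Define the measure $\mu(\Cscr)$ to be the sum over all roles in $\Pscr$ of the number of command constructors in that role's protocol $\pl$. This is a finite natural number determined by $\Cscr_0$ (no rule ever adds a role or enlarges a protocol term), and every transition strictly decreases $\mu$. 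Hence every trace from $\Cscr_0$ has length at most $\mu(\Cscr_0)$.

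For finite branching, I would argue that from any reachable configuration $\Cscr$ only finitely many one-step successors exist. There are finitely many roles, hence finitely many choices of which role to act on, and for a given role the applicable rule is determined by its head command; so it suffices to bound, for each rule, the number of resulting configurations. New is deterministic up to the choice of fresh nonce, which we fix by a canonical naming convention, so it contributes a single successor. Send is deterministic. For If-false the substitution $\sb$ is again determined up to canonical naming of the fresh symbols, so it is a single successor. The only rules with genuine multiplicity are Receive and If-true, whose successors are indexed by the solution set $\{\ssb,\DC_1\}~\css$ returned by $\sgen$ (resp.\ $\sgenB$); so the claim reduces to: \emph{$\sgen(\m,\IK,\DC)$ returns finitely many solutions.} This holds because $\sgen$ produces, for each symbol introduced, a matching against the finitely many terms currently in $\IK$ (and recursively into their finitely many subterms), using the derivability structure of the minimal sets in $\DC$ — there is no tupling/encryption closure to enumerate at the level of \emph{symbolic} solutions, since $\ssb$ maps symbols to symbolic terms built from the existing finite syntactic material, not to arbitrary derivable ground terms. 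After this, the time-constraint proviso and the comparison-constraint proviso can only prune successors, never create them.

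The main obstacle is the finiteness of $\sgen$'s solution set, which is really the classical ``finitely many symbolic states'' property underlying symbolic protocol analysis; I would either prove it directly by induction on the structure of $\m$ (each structural case matches against a bounded set of heads from $\IK$ and recurses on strictly smaller terms), or cite the standard constraint-solving argument (\eg\ \cite{cortier09csf}) once the precise definition of $\sgen$ from Section~\ref{subsec:basic-solving} is in hand. A secondary subtlety worth a sentence is that freshly generated nonces and fresh symbols must be named canonically (say, indexed by the position in the trace) so that "finitely branching" is not spoiled by spurious $\alpha$-variants; this is a bookkeeping convention rather than a real difficulty. Combining the length bound $\mu(\Cscr_0)$ with finite branching at every node, König's lemma gives that the set of traces from $\Cscr_0$ is finite. $\square$
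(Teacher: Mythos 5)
Your argument is sound, and in fact the paper never actually proves Proposition~\ref{prop:finite-traces}: despite the remark that missing proofs appear in the Appendix, the Appendix only contains proofs of Theorems~\ref{th:basic-obs} and~\ref{th:branching-obs}, so there is no official proof to compare yours against. Your decomposition --- a strictly decreasing measure $\mu$ counting command constructors to bound trace length, finite branching at each node, and K\"onig's lemma --- is the natural argument and matches what the authors evidently have in mind: the whole point of the symbolic/Rewriting-Modulo-SMT setup is that time symbols are accumulated as constraints rather than instantiated, so the only sources of branching are the choice of role, the choice of rule (forced by the head command), and the solution set of $\sgen$/$\sgenB$. You correctly isolate the one nontrivial ingredient, namely that $\sgen(\m,\IK,\DC)$ returns finitely many pairs $\tup{\ssb_i,\DC_i}$; the paper itself merely asserts this ("$\sgen$ \ldots will compute a finite set of such pairs") in its Appendix description of the constraint solver, and your justification --- unification of the pattern only against the finitely many encryption terms in the finite set $\IK$, with recursion on strictly smaller terms and no enumeration of the tupling/encryption closure at the symbolic level --- is the right one. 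Two minor points you could tighten: (i) the finiteness of $\IK$ at every reachable configuration is itself a consequence of the length bound (Send adds one term per step), so strictly speaking the depth bound should be established before, or independently of, the branching bound, which your ordering already respects; and (ii) the recursive calls inside $\checkBnd$/$\checkSubst$ re-invoke $\sgen$ on terms produced by unification, so termination there rests on the acyclicity of $\DC$ (Proposition~\ref{prop:trace-properties}) and the occur-check, which is worth saying explicitly if you want the induction to be airtight rather than deferred to the literature.
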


\begin{proposition}
\label{prop:trace-properties}
  Let $\tau = \Cscr_1 \stackrel{l_1}{\lra} \Cscr_2 \stackrel{l_2}{\lra} \cdots \stackrel{l_{n-1}}{\lra} \Cscr_n$ be a trace. For any $\Cscr_i = \tup{\Pscr_i, \IK_i, \DC_i, \EQ_i}$, such that $1 \leq i \leq n$, the following holds:
  \begin{itemize}
     \item For any $i \leq j \leq n$, $ \DC_n(\IK_{i})|_{\EQ_n} \subseteq \DC_n(\IK_{j})|_{\EQ_n}$, that is, the intruder knowledge can only increase;
     \item $\DC_{i}$ is an acyclic set of derivability constraints; 
     \item Let $\sym_k$ be a symbol created in some $\Cscr_k$, $k < i$ and let $\sym_i$ be a symbol created in $\Cscr_i$. If $\dc(\sym_k,\Sscr_k),\dc(\sym_i,\Sscr_i) \in \DC_i$, then $\Sscr_k \subseteq \Sscr_i$. That is, symbols that are introduced at a later transitions can be instantiated by more terms than symbols introduced at earlier transitions.
   \end{itemize} 
\end{proposition}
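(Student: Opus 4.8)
The plan is to prove the three items together by induction along the trace, with an inner case analysis on which rule of Figure~\ref{fig:os-timed} is instantiated at the transition $\Cscr_i \lra \Cscr_{i+1}$ under consideration; items~2 and~3 are statements about a single $\Cscr_i$ and are handled by induction on $i$, while item~1 relates $\Cscr_i$ and $\Cscr_j$ under the fixed final $\DC_n, \EQ_n$ and is reduced to the consecutive case $j=i+1$ and then chained. The base case is the initial configuration $\Cscr_1$, for which we assume (as is standard, and as holds in all our examples) that $\DC_1$ is acyclic with the birth-order structure described next.

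The core of the proof, and the source of items~2 and~3, is a \emph{birth-order invariant}. Assign to each symbol the index of the transition that created it (New creates nonces, not symbols; Receive and If-true may create fresh symbols via $\sgen$/$\sgenB$; If-false creates one fresh symbol per variable of $\m_1$). I would maintain the invariant that in every $\DC_i$, each constraint $\dc(\sym^\nu,\Sscr)$ has $\Sscr$ equal to the image, under the composition of all symbol substitutions applied after the birth of $\sym^\nu$, of the intruder knowledge $\IK$ at the moment $\sym^\nu$ was born; in particular $\Sscr$ mentions only symbols born strictly earlier (and guessables). Rule by rule: New and Send leave $\DC$ untouched; If-false adds exactly $\dc(\sym^\nu,\IK)$ for each fresh $\sym^\nu$; Receive and If-true add the fresh constraints returned in $\DC_1$ by $\sgen$/$\sgenB$ (which, by construction, constrain fresh symbols with the current $\IK$) and then apply $\ssb$ uniformly, which only rewrites existing constraints by terms over still-earlier symbols and deletes substituted symbols from $\dom(\DC)$ --- neither operation can violate the ordering. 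Item~2 is then immediate: every edge of $\Gscr_{\DC_i}$ points from an earlier-born symbol to a later-born one, so $\Gscr_{\DC_i}$ is a DAG. Item~3 follows because $\Sscr_k$ is the later-substitution image of $\IK$ at step $k$, $\Sscr_i$ is that of $\IK$ at step $i$, and $\IK$ only grows along the trace (only Send changes it, by adding a term); since the same chain of substitutions is applied to both, $\Sscr_k \subseteq \Sscr_i$. This last step needs an auxiliary monotonicity lemma: the minimal-set union and the closure operations following Definition~\ref{def:minimal} are monotone --- they only add derivable subterms --- so $\subseteq$ between symbolic-term sets is preserved.

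For item~1, by the reduction above it suffices to show $\DC_n(\IK_i)|_{\EQ_n} \subseteq \DC_n(\IK_{i+1})|_{\EQ_n}$ for each rule. For New this is an equality ($\IK$ unchanged); for Send, $\IK_{i+1} = \IK_i \cup \{\ms\}$ is a syntactic superset and $\ms' \mapsto \DC_n(\ms')|_{\EQ_n}$ is monotone in its term argument; for If-false, $\IK$ is unchanged and no global substitution is applied. The only genuine case is Receive and If-true, where $\IK_{i+1} = \ssb[\IK_i]$ and we must show $\DC_n(\IK_i)|_{\EQ_n} \subseteq \DC_n(\ssb[\IK_i])|_{\EQ_n}$. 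I would isolate this as a commutation lemma: because $\ssb$ is one of the substitutions in the chain leading from $\DC_{i+1},\EQ_{i+1}$ to $\DC_n,\EQ_n$, and $\ssb$ is consistent with the constraint set of step $i$ (cf.\ item~3 of Definition~\ref{def:eqCheck} and Lemma~\ref{lem:eqCheck}), instantiating a symbolic term according to $\DC_n,\EQ_n$ factors through the application of $\ssb$, so $\DC_n(\ssb[\ms])|_{\EQ_n} = \DC_n(\ms)|_{\EQ_n}$; the desired inclusion then holds, with equality.

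The step I expect to be the main obstacle is precisely this commutation lemma: pinning the denotation to the \emph{final} $\DC_n,\EQ_n$ while the configuration's own $\DC$, $\EQ$, and set of live symbols keep changing under the globally applied substitutions $\ssb$ requires care, and it rests on structural facts about $\sgen$ and $\sgenB$ --- that they return substitutions mapping symbols only to symbolic terms derivable from the current constraint set, and constrain each fresh symbol only with the current $\IK$ --- which are implicit in the operational semantics and would have to be made explicit and proved alongside the induction.
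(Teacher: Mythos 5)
The paper itself offers no proof of Proposition~\ref{prop:trace-properties} --- it is stated without argument and is not among the proofs in the appendix --- so your proposal can only be judged on its own terms. The overall shape (induction along the trace, case analysis on the rules of Figure~\ref{fig:os-timed}, a birth-order invariant yielding acyclicity, and a reduction of item~1 to consecutive steps) is the natural argument and is sound in outline. You also correctly identify the crux of item~1: the commutation of the denotation under the globally applied symbol substitutions $\ssb$ of the Receive and If-true rules. That lemma is genuinely needed and genuinely delicate, since $\IK_i$ may mention symbols that are no longer in $\dom(\DC_n)$, so $\DC_n(\IK_i)$ is not even well-defined until one stipulates that the intervening substitutions are applied first; flagging it is not the same as discharging it, and as written item~1 remains unproved.

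Two further steps are asserted but would not survive scrutiny in the form you give them. First, for item~3 you claim $\Sscr_k \subseteq \Sscr_i$ because ``the minimal-set closure only adds derivable subterms.'' It does not: the closure rules following Definition~\ref{def:minimal} explicitly \emph{remove} elements (e.g.\ $\enc(\m,\k)$ is deleted once $\k^{-1}$ is learned and replaced by $\m$), so $\IK_k \subseteq \IK_i$ does not give literal inclusion of the associated minimal sets --- only inclusion of their derivable closures. Either the proposition must be read in that weaker sense or you must argue separately that no such deletion occurs between steps $k$ and $i$; your auxiliary lemma as stated is false. Second, your birth-order invariant asserts that applying $\ssb$ ``only rewrites existing constraints by terms over still-earlier symbols.'' The constraint-solver's $\checkBnd$ explicitly allows binding one constrained symbol to \emph{another constrained symbol}, which may be later-born; the invariant survives only because $\checkBnd$ then re-constrains the target with the knowledge set of the \emph{earlier} of the two symbols, effectively re-dating it. Without invoking that clause, both the acyclicity claim (item~2) and the monotone-birth claim (item~3) have an unhandled case.
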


\paragraph{Timed Intruders:} In fact, our implementation generalizes the machinery in this section by considering multiple timed intruders~\cite{kanovich14fccfcs,nigam16esorics}. As described in~\cite{kanovich14fccfcs}, the standard Dolev-Yao may not be suitable for the verificaiton of Cyber-Physical Security Protocols where the physical properties of the environment is important. Differently from the Dolev-Yao intruder, a timed intruder needs to wait for the message to arrive before he can learn it. \cite{nigam16esorics} proved an upper-bound on the number of timed intruders. Our tool implements this strategy. However, for the examples considered here, the standard Dolev-Yao intruder is enough.

\section{Observational Equivalence}
\label{sec:obs-eq}
Our goal now is to determine when two term configurations $\Cscr_I = \tup{\Pscr_I,\IK_I,\DC_I,\EQ_I,\TC_I}@\tG$ and $\Cscr_I' = \tup{\Pscr_I',\IK_I',\DC_I',\EQ_I',\TC_I'}@\tG'$ cannot be distinguished by the Dolev-Yao intruder. That is, for any trace starting from $\Cscr_I$ there is \emph{an equivalent trace} starting from $\Cscr_I'$. The following definition specifies observables which collect the necessary information from a trace:

\begin{definition}
\label{def:basic-observables}
  Let $\tau = \Cscr_1 \stackrel{l_1}{\lra} \Cscr_2 \stackrel{l_2}{\lra} \cdots \stackrel{l_{n-1}}{\lra} \Cscr_n = \tup{\Pscr_n,\IK_n,\DC_n,\EQ_n,\TC_n}@\tG_n$ be a timed trace. Its observable is the tuple $\tup{\tv_I,\Lscr_\tau,\IK_n,\DC_n,\EQ_n,\TC_n}$, where $\tv_I$ is the global time at configuration $\Cscr_1$, $\Lscr_\tau$ is the sequence of non-empty labels in $\tau$.
  Let $\Cscr$ be a configuration. Let $\Tscr(\Cscr)$ be the set of all traces with initial configuration $\Cscr$. The  observables of $\Cscr$ is $\Oscr(\Cscr) = \{\Oscr_\tau \mid \tau \in \Tscr(\Cscr)\}$, that is, the set of all observables of traces starting from $\Cscr$.
\end{definition}




Two configurations are observationally equivalent if their observables are equivalent.

 \begin{definition}
   A configuration $\Cscr$ approximates a configuration $\Cscr'$, written $\Cscr \preceq \Cscr'$ if for any $\Oscr \in \Oscr(\Cscr)$ there exists an equivalent observable $\Oscr'\in \Oscr(\Cscr')$, that is, $\Oscr \sim \Oscr'$ (Definition~\ref{def:equiv-obs}). The configurations are observationally equivalent, written $\Cscr \sim \Cscr'$, if and only if $\Cscr \preceq \Cscr'$ and $\Cscr' \preceq \Cscr$.
 \end{definition}

\begin{definition}
\label{def:equiv-obs}
   Let $\Oscr = \tup{\Lscr, \IK, \DC, \EQ, \TC}@\tG$ and $\Oscr' = \tup{\Lscr', \IK', \DC', \EQ', \TC'}@\tG'$ be two observables, such that $\Lscr = \{(\pm_1 \ms_1@\tG_1) \ldots (\pm_p \ms_p@\tG_p)\}$ and $\Lscr' =\{(\pm_1' \ms_1'@\tG_1') \ldots (\pm_n' \ms_n'@\tG_n')\}$. The observation $\Oscr$ is equivalent to $\Oscr'$, written $\Oscr \sim \Oscr'$ if the following conditions are all true:
\begin{enumerate} 
  \item $p = n = N$, that is, they have the same length $N$;
  \item $\pm_i = \pm_i'$, for all $1 \leq i \leq N$, that is, have the same label type;
  
  \item \label{defitem:blackbox} There is a black-box bijection $\bij(\Oscr, \Oscr')$ between $\Oscr$ and $\Oscr'$ (Definition~\ref{def:black-box-bij});

  \item \label{defitem:termapprox} The messages observed are equivalent, that is, $\tup{\ms_1, \ldots, \ms_N} \sim_{\Oscr,\Oscr'} \tup{\ms_1', \ldots, \ms_N'}$ (Definition~\ref{def:term-approx-denotational});

  \item \label{defitem:timeapprox} Assume $\widetilde{\tv}$ and $\widetilde{\tv'}$ are the set of time symbols in $\TC$ and $\TC'$, respectively. The following formulas are tautologies:
  \[
  \begin{array}{l}
      \forall \widetilde{\tv}. \left[\TC \Rightarrow \exists \widetilde{\tv'}. \left[\TC' \land 
      \tG_1 = \tG_1' \land \cdots \land \tG_N = \tG_N' \right]\right]\\[5pt]
      \forall \widetilde{\tv'}. \left[\TC' \Rightarrow \exists \widetilde{\tv}. \left[\TC \land 
      \tG_1 = \tG_1' \land \cdots \land \tG_N = \tG_N' \right]\right]
  \end{array}
    \]
    specifying that for any time $\tG_1, \ldots, \tG_N$ satisfying $\TC$ when the messages $\ms_1, \ldots, \ms_N$ were observed in $\Oscr$, we can find $\tG_1', \ldots, \tG_N'$ satisfying $\TC'$, that makes the time of observation equal. 
\end{enumerate}
\end{definition}

The first two conditions are clear. If two observables differ on the number of observations or they differ on their types, then they can be distinguished. We motivate and define next the conditions~\ref{defitem:blackbox} and \ref{defitem:termapprox} of black-box bijections and term approximation. Finally, the condition \ref{defitem:timeapprox} specifies that the intruder cannot distinguish the observed messages by measuring the time when they are observed.

\subsection{Black-Boxed Terms}
There are some subtleties involved in handling nonces and handling encrypted terms for which the intruder does not possess the inverse key. Consider the two following sequence of labels:
\[
  \begin{array}{l}
    \Lscr_1 = \tup{(+ \n_1),(+ \n_2),(+ \tup{\n_1,\n_2}) }\\
    \Lscr_2 = \tup{(+ \n_1),(+ \n_2),(+ \tup{\n_2,\n_1}) }
  \end{array}
\]
where $\n_1, \n_2$ are nonces, say created by an honest participant. The intruder can distinguish these observables because of the order of the pair of the last send messages. However, the intruder should not be able to distinguish the following two sequence of labels:
\[
  \begin{array}{l}
    \Lscr_1' = \tup{(+ \n_1),(+ \n_2),(+ \tup{\n_1,\n_2}) }\\
    \Lscr_2' = \tup{(+ \n_1'),(+ \n_2'),(+ \tup{\n_1',\n_2'}) }
  \end{array}
\]
where $\n_1,\n_1',\n_2,\n_2'$ are nonces. This is because although different constants, they are appear in the same order. The same happens for encrypted terms that the intruder cannot decrypt.\footnote{Recall that from the Dolev-Yao intruder's point of view, such terms are treated as black-boxes.} Consider the following two sequence of labels:
\[
  \begin{array}{l}
    \Lscr_1'' = \tup{(+ \enc(\t_1,\k_1)),(+ \enc(\t_2,\k_2)),(+ \tup{\enc(\t_1,\k_1),\enc(\t_2,\k_2)}) }\\
    \Lscr_2'' = \tup{(+ \enc(\t_1',\k_1')),(+ \enc(\t_2',\k_2')),(+ \tup{\enc(\t_2',\k_2'),\enc(\t_1',\k_1')}) }
  \end{array}
\]
where $\t_1, \t_2$ are different constants and $\k_1, \k_2$ are keys for which the intruder does not possess the decryption key. Although the intruder cannot decrypt the exchanged messages, he can still distinguish $\Lscr_1''$ and $\Lscr_2''$ because of the order of last tuple sent. 



The definition below specifies when two observables are equal. It uses definition of black-box bijections and term equivalences defined in the following.

\begin{definition}
  Let $\Oscr = \tup{\Lscr, \IK, \DC, \EQ, \TC}@\tG$ be an observable. 
  The set of black-box terms of $\Oscr$ is the set, written $\BB(\Oscr)$, of all sub-terms in $\Lscr$ that are nonces, $\n$, or encryption terms, $\enc(\t,\k)$, for which the inverse of $\k$ is not in $\IK$.
\end{definition}

\begin{definition}
 Let $\Oscr$ be an observable. A symbolic term $\ms$ restricted to $\BB(\Oscr)$, written, $\ms|_{\BB(\Oscr)}$, is the term obtained by replacing  by the special symbol $*$ all sub-terms of $\ms$ that (1) do not contain any term in $\BB(\Oscr)$ or (2) is not contained in a term in $\BB(\Oscr)$. 
 \end{definition}

For example, the consider the term $\ms = \tup{\n_1,\t_1,\enc(\tup{\t_2,\enc(\t_3,\k_2)},\k_1)}$ where the intruder knows the inverse of $\k_1$, but not of $\k_2$. Then the term restricted to black-boxes is $\tup{\n_1,*,\enc(\tup{*,\enc(\t_3,\k_2)},*)}$, where $\t_1, \t_2$ and $\k_1$ are replaced by $*$.

\begin{definition}
\label{def:black-box-bij}
   Let $\Oscr = \tup{\Lscr, \IK, \DC, \EQ, \TC}@\tG$ and $\Oscr' = \tup{\Lscr', \IK', \DC', \EQ', \TC'}@\tG'$ be two observables, such that $\Lscr = \{(\pm_1 \ms_1) \ldots (\pm_n \ms_n)\}$ and $\Lscr' =\{(\pm_1 \ms_1') \ldots (\pm_n \ms_n')\}$.  A black-box bijection, $\bij(\Oscr,\Oscr')$ is any bijection between the sets $\BB(\Oscr)$ and $\BB(\Oscr')$ such that $\bij(\Oscr,\Oscr')$ makes the terms $\tup{\ms_1,\ldots,\ms_n}|_{\BB(\Oscr)}$ and  $\tup{\ms_1',\ldots,\ms_n'}|_{\BB(\Oscr)}$ equal.\footnote{
One can easily compute one such bijection by recursively traversing terms and checking which ones are black-boxed or not. This procedure is implemented in the file obs-equiv.maude, function \texttt{mkSSBB}.}
\end{definition}

For example, there is no bijection between the terms in $\Lscr_1''$ and $\Lscr_2''$ shown above. However, the bijection $\{\enc(\t_1,\k_1) \leftrightarrow \enc(\t_1',\k_1'), 
  \enc(\t_2,\k_2) \leftrightarrow \enc(\t_2',\k_2')\}$
makes equal the terms in $\Lscr_1''$ and the sequence of labels $\Lscr_2'''$ below:
\[
    \Lscr_2''' = \tup{(+ \enc(\t_1',\k_1')),(+ \enc(\t_2',\k_2')),(+ \tup{\enc(\t_1',\k_1'),\enc(\t_2',\k_2')}) }.
\]

\subsection{Term approximation}



\textbf{Notation:} For the remainder of this section, to avoid repetition,  we will assume given two observables $\Oscr = \tup{\Lscr,\IK,\DC, \EQ, \TC}@\tG$ and $\Oscr' = \tup{\Lscr',\IK',\DC',\EQ', \TC'}@\tG'$ with disjoint sets of term and time symbols and of nonces. Moreover, $\bij = \bij(\Oscr,\Oscr')$ is a bijection between the black-box terms of the observables $\Oscr$ and $\Oscr'$.

First we handle black boxed in the same way as nonces: given a bijection with $n$ relations:
\[
 \bij =  \{ \ms_1 \leftrightarrow \ms_1', \ldots, \ms_n \leftrightarrow \ms_n'\}
\]
We create $n$ new nonces $\n^\nu_1, \ldots, \n^\nu_n$ and replace each occurrence of $\ms_i$ in $\Oscr$ and $\ms_i'$ in $\Oscr'$ by the same nonce $\n^\nu_i$ for $1 \leq i \leq n$. In the following we assume that this replacement has been already performed.

We define when a symbolic term approximates another one.

\begin{definition}
\label{def:term-approx-denotational}
 Let $\ms$ and $\ms'$ be two symbolic terms in, respectively, $\Oscr$ and $\Oscr'$. The term $\ms$ approximates $\ms'$, written $\ms \preceq_{\Oscr,\Oscr'} \ms'$ if for all ground terms $\m \in \DC\mid_\EQ(\ms)$, then $\m \in \DC'\mid_{\EQ'}(\ms')$. Two terms are equivalent, written $\ms \sim_{\Oscr,\Oscr'} \ms'$ if and only if $\ms \preceq_{\Oscr,\Oscr'} \ms'$ and $\ms' \preceq_{\Oscr,\Oscr'} \ms$.
\end{definition}

\newcommand\termApprox{\mathsf{termApprox}}
\newcommand\symDer{\mathsf{symDer}}

A problem with the definition above is the quantification over all terms, which is an infinite set. We define in Definition~\ref{def:term-eq-approx} a procedure, called $\termEqApprox$ which checks for the observational equivalence of terms. It uses the auxiliary functions $\termApprox$ and $\canEq$. 

We start by defining the function $\termApprox$ which uses the auxiliary function $\symDer$. 

\begin{definition}
\label{def:term-approx}
 Let $\ms$ and $\ms'$ be terms in, respectively, $\Oscr$ and $\Oscr'$. The predicate $\termApprox(\ms,\ms',\Oscr,\Oscr')$ evaluates to true if and only if the two conditions below are satisfied:
\begin{enumerate}
  \item 
  \label{defitem:term-approx-match}
  There exists a matching substitution $\theta = \{\sym'_1 \mapsto \ms_1, \ldots, \sym_n' \mapsto \ms_n\}$ mapping symbols in $\ms'$ to subterms in $\ms$, such that, $\theta[\ms'] = \ms$;
  \item 
  \label{defitem:term-approx-sym-approx}
  For each $\sym_i' \mapsto \ms_i \in \theta$, we have $\symDer(\sym_i',\ms_i,\DC,\DC')$ (Definition~\ref{def:sym-approx}).
\end{enumerate}
\end{definition}

\begin{definition}
\label{def:sym-approx}
Let $\sym'$ be a symbol in $\Oscr'$ and $\dc(\sym',\Sscr') \in \DC'$ and $\ms$ be a term in $\Oscr$. We say that $\symDer(\sym',\ms,\DC,\DC')$ if one of the following holds:
  \begin{itemize}
    \item $\ms$ is an guessable;
    \item $\ms$ is a key, then $\ms \in \Sscr'$;
    \item $\ms$ is a nonce $\n$, then $\n \in \Sscr'$;
    \item $\ms$ is an encryption $\enc(\ms_1,\k) \notin \BB(\Oscr)$, then $
    \symDer(\sym',\tup{\ms_1,\k},\DC,\DC')$;
    \item $\ms = \tup{\ms_1,\ldots,\ms_n}$, then 
    $\symDer(\sym',\ms_i,\DC,\DC')$ for $1 \leq i \leq n$;
    \item $\ms = \sym$ is a symbol such that $\dc(\sym,\Sscr) \in \DC$, then for each $\ms_1 \in \Sscr$, $\symDer(\sym',\ms_1,\DC,\DC')$.
  \end{itemize}
\end{definition}

We prove the following soundness and completeness results for the functions $\symDer$ and $\termApprox$:

\begin{theorem}
\label{th:basic-obs}
  Let $\ms$ and $\ms'$ be two terms. The for all $\m \in \DC(\ms)$, we have $\m \in \DC(\ms')$ if and only if $\termApprox(\ms,\ms',\DC,\DC')$. 
\end{theorem}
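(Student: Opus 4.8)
The plan is to prove the biconditional by establishing the two directions separately, with the reverse direction (soundness of $\termApprox$) being the more delicate one. For the forward direction, I would argue contrapositively: suppose $\termApprox(\ms,\ms',\DC,\DC')$ fails. Either there is no matching substitution $\theta$ with $\theta[\ms'] = \ms$, in which case the structural skeleton of $\ms$ already differs from that of $\ms'$ and I can exhibit a ground instance $\m \in \DC(\ms)$ whose skeleton cannot arise as an instance of $\ms'$ (since $\DC$-instantiation only refines symbols, never changes the rigid term structure); or there is such a $\theta$, but for some assignment $\sym_i' \mapsto \ms_i \in \theta$ we have $\lnot\symDer(\sym_i',\ms_i,\DC,\DC')$, and I would use an auxiliary lemma (below) to pull a concrete ground witness out of $\ms_i$ that is not derivable from $\DC'(\sym_i')$, and then plug it back into $\ms$ to obtain a ground term in $\DC(\ms) \setminus \DC(\ms')$.

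The technical heart is an auxiliary lemma about $\symDer$: for a symbol $\sym'$ with $\dc(\sym',\Sscr') \in \DC'$ and a term $\ms$ in $\Oscr$, $\symDer(\sym',\ms,\DC,\DC')$ holds if and only if every ground term in $\DC(\ms)$ lies in $\DC'(\sym')$. I would prove this by induction on the structure of $\ms$ following exactly the case split in Definition~\ref{def:sym-approx}: guessables are in every $\Sscr'$; a key or nonce $\ms$ is in $\DC'(\sym')$ as a ground instance precisely when $\ms \in \Sscr'$ (here I use that $\DC'$ is acyclic and that its $\Rscr'$-closure adds only encryptions and tuples on top of $\Sscr'$, so a nonce/key can only be produced if it is already a member); the encryption case reduces to the tuple $\tup{\ms_1,\k}$ because $\DC'(\sym')$ contains $\enc(\m_1,\k_0)$ iff it contains the components that build it via the $\Rscr'$ rules; the tuple case is the conjunction over components; and the symbol case $\ms = \sym$ with $\dc(\sym,\Sscr) \in \DC$ uses that $\DC(\sym) = \bigcup_{\ms_1 \in \Sscr}\DC(\ms_1)$ modulo the tupling/encryption closure, so quantifying over ground instances of $\sym$ is the same as quantifying over ground instances of each $\ms_1 \in \Sscr$. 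With this lemma, the reverse direction is almost immediate: given the matching $\theta$ and $\symDer$ for each component, any ground $\m \in \DC(\ms)$ decomposes along $\theta$ into ground instances of the $\ms_i$, each of which is in $\DC'(\sym_i')$ by the lemma, and reassembling via $\theta$ on $\ms'$ gives $\m \in \DC'(\ms')$.

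For the forward direction I also need the converse reading of the same auxiliary lemma, which is exactly its ``only if'' half, plus the observation about rigid skeletons: I would state as a small sublemma that if $\m \in \DC(\ms)$ then $\m$ and $\ms$ have the same shape after blanking out subterms under symbols, i.e.\ there is a matching $\theta$ with $\theta[\ms] = \m$; this is an easy induction using Definition~\ref{def:basic-semantics-terms}, since $Sub_\DC$ only substitutes for symbol-occurrences. Consequently, if $\ms$ and $\ms'$ admit a common ground instance at all, a matching $\theta$ with $\theta[\ms'] = \ms$ must exist, which handles the first failure case; and the second failure case is handled by the contrapositive of the auxiliary lemma as described above.

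The main obstacle I anticipate is making the $\symDer$ auxiliary lemma watertight in the encryption and symbol cases, where ``ground terms in $\DC(\ms)$'' must be characterized precisely against the inductive closure $\Rscr'$ from Definition~\ref{def:minimal}: I need that membership of a ground nonce or key forces literal membership in $\Sscr'$ (no ``accidental'' derivations), that black-boxed encryptions are correctly excluded (the hypothesis $\enc(\ms_1,\k)\notin\BB(\Oscr)$ in Definition~\ref{def:sym-approx} matters here, as a black-box would be treated atomically rather than decomposed), and that the recursion in both $\symDer$ and in the definition of $\DC(\cdot)$ is well-founded — which rests on acyclicity of $\DC$ and $\DC'$ (Proposition~\ref{prop:trace-properties}) so that the induction on the symbol dependency graph terminates. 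A secondary subtlety is that the theorem statement writes $\DC(\ms)$ and $\DC(\ms')$ without the $\EQ$-restriction and without explicitly threading $\Oscr,\Oscr'$; I would read it in the plain (unrestricted) form matching Definitions~\ref{def:term-approx} and~\ref{def:sym-approx}, note that $\termApprox$ takes $\DC,\DC'$ as arguments rather than $\Oscr,\Oscr'$, and defer the $\EQ$-aware refinement to the later $\termEqApprox$ development, so that no further machinery is needed here.
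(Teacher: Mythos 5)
Your proposal is correct and follows essentially the same route as the paper: your auxiliary lemma (that $\symDer(\sym',\ms,\DC,\DC')$ holds iff every ground term in $\DC(\ms)$ lies in $\DC'(\sym')$) is exactly the paper's Lemma~\ref{lemma:sym-approx}, proved by the same double induction (on the height of symbols of $\ms$ in the acyclic dependency graph of $\DC$, then on the structure of $\ms$), and the main theorem is then obtained by structural induction on $\ms'$ with the symbol case discharged by that lemma, just as you describe via the matching substitution $\theta$.
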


\paragraph{Example}: We illustrate the procedure $\termApprox$ with an example. Consider
\[
  \begin{array}{l}
  \ms = \tup{\n,\enc(\tup{\n,\t,\sym_1,\sym_2},\k_1)} \\
  \ms' = \tup{\n',\enc(\sym',\k_1)} \\
  \end{array}
\]
Moreover, assume that the intruder knows the inverse of $\k_1$ in both observables, that is, $\k_1^{-1} \in \IK \cap \IK'$, and consider the bijection $\bij = \bij(\Oscr,\Oscr') = \{\n \leftrightarrow \n'\}$. We replace these nonces by the fresh nonce $\n^\nu$.

We check for Condition~\ref{def:term-approx}.\ref{defitem:term-approx-match}, that is, construct a matching substitution mapping symbols in $\ms'$ to subterms in $\ms$. We obtain the following matching substitution:
\[
  \theta = \{\sym' \mapsto \tup{\n^\nu,\t,\sym_1,\sym_2}\}
\]
as $\theta[\ms'] = \tup{\n^\nu,\enc(\tup{\n',\t,\sym_1,\sym_2},\k_1)}$.

Notice that if such a matching substitution does not exists, then the $\ms$ cannot approximate $\ms'$. Indeed $\ms' \npreceq_{\Oscr,\Oscr'} \ms$ as there is no such matching substitution: The term $\ms'$ can be instantiated to a term where a non-tuple is encrypted, \eg, $\tup{\n_1,\enc(\t,\k_1)}$ for some guessable $\t$, whereas $\ms$ can only be instantiated by terms where a tuple with at least four elements is encrypted. 

Now we check for the Condition~\ref{def:term-approx}.\ref{defitem:term-approx-sym-approx}. In order to check this condition, we need to know more about $\DC$ and $\DC'$. Assume that the symbol $\sym_1$ is created before the symbol $\sym_2$ in the trace corresponding to $\Oscr$. Then by Proposition~\ref{prop:trace-properties}, we have that if $\dc(\sym_1, \Sscr_1), \dc(\sym_2, \Sscr_2) \in \DC$, then $\Sscr_1 \subseteq \Sscr_2$. 

Consider, $\dc(\sym_1, \{\t_1,\k_1\}), \dc(\sym_2, \{\t_1,\t_2,\k_1,\k_2\}) \in \DC$ which satisfies the condition above. Moreover, assume $\dc(\sym',\Sscr') \in \DC'$ with $\Sscr' = \{\n^\nu,\t_1,\k_1,\t_2,\k_2\}$. 

We have that $\symDer(\sym',\{\n, \sym_1, \sym_2\},\DC,\DC')$ as:
\begin{itemize}
  \item $\symDer(\sym',\n^\nu,\DC,\DC')$  as $\n^\nu \in \Sscr'$;
  \item $\symDer(\sym',\sym_1,\DC,\DC')$ as $\{\t_1,\k_1\} \subseteq \Sscr'$;
  \item $\symDer(\sym',\sym_2,\DC,\DC')$ as $\{\t_1,\t_2,\k_1,\k_2\} \subseteq \Sscr'$;
\end{itemize}

The following definition specifies the function $\canEq$.
Intuitively, $\canEq$ checks whether it is possible to instantiate w.r.t. $\DC,\EQ$ the symbols in $\ms_1,\ms_2$ so to falsify the constraint $\Neq(\ms_1,\ms_2)$. 

\begin{definition}
\label{def:neq-approx}
$\canEq(\ms_1,\ms_2,\DC,\EQ)$ evaluates to true if and only if the following conditions are satisfied:
\begin{enumerate}
  \item If is a unifier $\sigma = [\sym_1 \mapsto \ms_1^*, \ldots, \sym_n \mapsto \ms_n^*]$ of $\ms_1$ and $\ms_2$;
  \item $\sigma$ does not make any comparison constraint in $\EQ$ false;
   \item For each $\sym_i \mapsto \ms_i^* \in \sigma$, $\sym_i$ can generate the term $\ms_i^*$ w.r.t. $\DC$.
    \label{defitem:neq4} 
\end{enumerate}
\end{definition}

\begin{definition}
\label{def:term-eq-approx}
 Let $\ms$ and $\ms'$ be terms in, respectively, $\Oscr$ and $\Oscr'$. The predicate $\termEqApprox(\ms,\ms',\Oscr,\Oscr')$ evaluates to true if and only if:
\begin{enumerate}
  \item If $\eqCheck(\EQ,\DC) = false$;
  \label{def:item-eqcheck1}
  \item Otherwise if $\eqCheck(\EQ,\DC) = \eqCheck(\EQ',\DC') = true$, let $\ssb$ and $\ssb'$ being the corresponding witnessing  matching subsitutions (Definition~\ref{def:eqCheck});
  \label{def:item-eqcheck2}
  \begin{enumerate}
  \item
  \label{defitem:term-eq-approx}
  $\termApprox(\ssb[\ms],\ssb'[\ms'],\ssb[\DC],\ssb'[\DC'])$ is true where the witnessing matching subsitution is  $\theta = [\sym'_1 \mapsto \ms_1, \ldots, \sym_n' \mapsto \ms_n]$ (Definition~\ref{def:term-approx});
  \item 
  \label{defitem:neq}
   $\canEq(\ssb[\theta[\ms_1']],\ssb[\theta[\ms_2']],\DC,\EQ)$ is false (Definition~\ref{def:neq-approx}) for each inequality constraint $\Neq(\m_1',\m_2') \in \EQ'$.
  \end{enumerate}
\end{enumerate}
\end{definition}

Intuitively, the Condition \ref{def:term-eq-approx}.\ref{def:item-eqcheck1} specifies that if $\EQ, \DC$ are not consistent, then the approximation is trivial. Otherwise, if both $\EQ,\DC$ and $\EQ',\DC'$ are consistent, then there are matching substitutions $\ssb$ and $\ssb'$. We apply $\ssb$ and $\ssb'$ substitutions in, respectively, $\Oscr$ and $\Oscr'$ in the following. This takes care of all the $\Eq$ constraints. Condition \ref{def:term-eq-approx}.\ref{defitem:term-eq-approx} checks whether the terms they generate are the same.  Condition \ref{def:term-eq-approx}.\ref{defitem:neq} intuitively checks whether it is not  possible to falsify any inequality constraint in $\EQ'$ using $\DC$ and $\EQ$. 

\begin{theorem}
\label{th:branching-obs}
  Let $\ms$ and $\ms'$ be two terms from, respectively, $\Oscr$ and $\Oscr'$. Then $\ms \preceq_{\Oscr,\Oscr'} \ms'$ if and only if $\termEqApprox(\ms,\ms',\Oscr,\Oscr')$. 
\end{theorem}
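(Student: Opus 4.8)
The plan is to derive Theorem~\ref{th:branching-obs} from the corresponding constraint-free result, Theorem~\ref{th:basic-obs} for $\termApprox$, together with the soundness and completeness of $\eqCheck$ (Lemma~\ref{lem:eqCheck}), following the clause structure of Definition~\ref{def:term-eq-approx}. First I would dispose of the two degenerate cases. If $\eqCheck(\EQ,\DC)$ is false then, by Lemma~\ref{lem:eqCheck}, $\EQ$ is unsatisfiable w.r.t.\ $\DC$, so no grounding witnesses the membership condition of $\DC(\ms)|_{\EQ}$; hence $\DC(\ms)|_{\EQ}=\emptyset$, $\ms \preceq_{\Oscr,\Oscr'} \ms'$ holds vacuously, and $\termEqApprox$ is true by clause~\ref{def:item-eqcheck1}. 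If instead $\eqCheck(\EQ,\DC)$ is true but $\eqCheck(\EQ',\DC')$ is false, then $\DC'(\ms')|_{\EQ'}=\emptyset$ while $\DC(\ms)|_{\EQ}$ is non-empty --- the operational semantics guarantees, via Proposition~\ref{prop:trace-properties}, that the derivability constraints occurring in an observable bottom out in ground terms, so a satisfying grounding of $\EQ$ extends to a ground instance of $\ms$ --- hence $\ms \not\preceq_{\Oscr,\Oscr'} \ms'$, and $\termEqApprox$ is false since neither clause of Definition~\ref{def:term-eq-approx} applies.

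The core is the case $\eqCheck(\EQ,\DC)=\eqCheck(\EQ',\DC')=\mathrm{true}$, with witnessing matching substitutions $\ssb$ and $\ssb'$ as in Definition~\ref{def:eqCheck}. The key lemma I would isolate and prove is that applying the unifier $\ssb$ of the equality constraints to the term and to the constraint set exactly internalises those equalities: $\DC(\ms)|_{\EQ}$ equals the set of ground $\m\in\ssb[\DC](\ssb[\ms])$ whose grounding additionally respects every $\Neq$ of $\EQ$, and, crucially, $\ssb[\DC]$ is again an acyclic well-formed set of derivability constraints so that Theorem~\ref{th:basic-obs} applies to it (likewise on the primed side). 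Granting this, the equality part is settled: by Theorem~\ref{th:basic-obs}, clause~\ref{def:term-eq-approx}.\ref{defitem:term-eq-approx}, namely $\termApprox(\ssb[\ms],\ssb'[\ms'],\ssb[\DC],\ssb'[\DC'])$, is equivalent to $\ssb[\DC](\ssb[\ms])\subseteq\ssb'[\DC'](\ssb'[\ms'])$. It remains to account for the inequalities of $\EQ'$: a ground $\m\in\DC(\ms)|_{\EQ}$ lies in $\DC'(\ms')|_{\EQ'}$ iff, besides $\m\in\ssb'[\DC'](\ssb'[\ms'])$, the grounding realising $\m$ from $\ssb'[\ms']$ keeps the two sides of each inequality of $\EQ'$ apart. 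Transporting that grounding through the matching substitution $\theta$ produced by $\termApprox$ (so $\theta[\ssb'[\ms']]=\ssb[\ms]$), this becomes exactly the requirement that no $\DC,\EQ$-consistent instantiation of $\ssb[\theta[\ms_1']]$ and $\ssb[\theta[\ms_2']]$ equates them, i.e.\ that $\canEq(\ssb[\theta[\ms_1']],\ssb[\theta[\ms_2']],\DC,\EQ)$ is false --- clause~\ref{def:term-eq-approx}.\ref{defitem:neq}. Composing the two equivalences in both directions yields the claim.

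I expect the main obstacle to be the forward (``only if'') direction, which I would prove contrapositively: if clause~\ref{defitem:term-eq-approx} fails, Theorem~\ref{th:basic-obs} yields a ground $\m\in\ssb[\DC](\ssb[\ms])\setminus\ssb'[\DC'](\ssb'[\ms'])$; if clause~\ref{defitem:neq} fails, there is a $\DC,\EQ$-consistent grounding equating some $\ssb[\theta[\ms_1']]$ and $\ssb[\theta[\ms_2']]$; in either case I must promote this witness to an element of $\DC(\ms)|_{\EQ}$ that escapes $\DC'(\ms')|_{\EQ'}$, i.e.\ show it can be chosen to also respect the $\Neq$ constraints of $\EQ$ and that it transports along $\theta$ without collapsing. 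This is where the real work lies, since it needs a genericity argument about derivability denotations --- the tupling closure of Definition~\ref{def:minimal} and the monotonicity of the symbol constraints from Proposition~\ref{prop:trace-properties} guarantee that excluding finitely many equalities never empties the relevant difference set --- which I would package as a small lemma and invoke on both sides. A secondary, purely bookkeeping obstacle is that $\preceq_{\Oscr,\Oscr'}$ was defined after the black-boxed subterms of $\Oscr$ and $\Oscr'$ have been renamed to shared fresh nonces; I would check that this renaming commutes with the symbol substitutions $\ssb$ and $\ssb'$, so that it may be performed either before or after the case analysis with no effect on the argument.
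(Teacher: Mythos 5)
Your proposal is correct and follows essentially the same route as the paper: the same split into the degenerate $\eqCheck$ case and the core case, the same internalisation of equality constraints via the unifier (the paper's Lemma~\ref{lemma:equality}), the same reduction of the remaining equality part to Theorem~\ref{th:basic-obs}, the same characterisation of the $\EQ'$-inequalities via $\canEq$ (the paper's Lemma~\ref{lemma:case2}), and the same genericity lemma for discharging the $\EQ$-inequalities on the source side, which the paper proves with exactly the large-tuple argument you sketch (Lemma~\ref{lemma:case1}). Your additional handling of the asymmetric case ($\eqCheck(\EQ,\DC)$ true but $\eqCheck(\EQ',\DC')$ false) and the remark about the black-box renaming commuting with the substitutions are sensible bookkeeping points that the paper leaves implicit.
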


\paragraph{Example}
Consider the protocol role for Alice in Example~\ref{ex:ns2}. We omit the time constraints as they are not important for this example. From an initial configuration with Alice and the intruder called $eve$, one can construct the following observable $\Oscr'$:
\[
  \begin{array}{l}
    \tup{+ \enc(\tup{\n_1',alice},\pk(eve)), - \sym', + error},\\ \IK', \DC', \{\Neq(\sym',\enc(\tup{n_1,\v},\pk(alice)\}) \}
  \end{array}
\]
for some intruder knowledge $\IK'$ and $\DC'$.

Consider the protocol role which outputs an error:
\[
\begin{array}{ll}
 Alice' := & (\new ~N_a),  (+\enc(\tup{N_a,alice},\pk(Z))), (- \v),(+ error)
\\[2pt]
\end{array}
\]
Consider the observable $\Oscr$ for this protocol:
\[
  \begin{array}{l}
    \tup{+ \enc(\tup{\n_1,alice},\pk(eve)), - \sym, + error},\IK, \DC, \emptyset) \}
  \end{array}
\]
which is similar to the observable above, but without the comparison constraint. There is the bijection $\n_1 \leftrightarrow \n_1'$. Moreover, since the message is sent to $eve$, $\n_1 \in \IK$ and $\n_1' \in \IK'$. 

Notice that%
\begin{small}
\[
\begin{array}{c}
  \termApprox(\tup{\enc(\tup{\n_1,alice},\pk(eve)),sym,error},\\
  \tup{\enc(\tup{\n_1',alice},\pk(eve)),sym',error},\DC, \DC')  
\end{array}
\]   
 \end{small}%
as the messages are the same. 
However, due to the comparison constraint in $\Oscr$, it is not the case that 
\begin{small}
\[
\begin{array}{c}  
  \termEqApprox(\tup{\enc(\tup{\n_1,alice},\pk(eve)),sym,error},
  \\\tup{\enc(\tup{\n_1',alice},\pk(eve)),sym',error}, \Oscr, \Oscr')
\end{array}
\]  
\end{small}%
Indeed, the function $\canEq(\sym, \enc(\tup{n_1,\t},\pk(alice)\}, \DC, \emptyset)$ is true. (Recall that variables are replaced by fresh symbols.) This is true because the intruder knows the public key of Alice and $\n_1$.

\subsection{Timing of Messages}
\label{subsec-timing}
For Condition~\ref{def:equiv-obs}.\ref{defitem:timeapprox}, we reduce the formulas to formulas for which existing solvers can be used~\cite{duterte15smt}, namely formulas of the form $\exists \forall$:

\noindent
\begin{small}
      \[
    \begin{array}{l}
      \forall \widetilde{\tv}. \left[\TC \Rightarrow \exists \widetilde{\tv'}. \left[\TC' \land 
      \tG_1 = \tG_1' \land \cdots \land \tG_N = \tG_N' \right]\right] \textrm{is a tautology}\\[2pt]
      \Leftrightarrow
      \neg \forall \widetilde{\tv}. \left[\TC \Rightarrow \exists \widetilde{\tv'}. \left[\TC' \land 
      \tG_1 = \tG_1' \land \cdots \land \tG_N = \tG_N' \right]\right] \textrm{is unsat}
      \\[2pt]
      \Leftrightarrow \exists \widetilde{\tv}. \left[\TC \land \forall \widetilde{\tv'}. \left[ \TC' \Rightarrow \neg \left[ 
      \tG_1 = \tG_1' \land \cdots \land \tG_N = \tG_N' \right] \right]\right] \textrm{is unsat}\\
    \end{array}
    \]
\end{small}
In our implementation, we use the SMT solver Yices for solving this formula (which is decidable~\cite{duterte15smt}), where all time variables have type Reals. 

\begin{lemma}
Determining whether two timed observables are equivalent is decidable.
\end{lemma}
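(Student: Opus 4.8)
The plan is to unfold Definition~\ref{def:equiv-obs} and check each of its five conditions is individually decidable, so that the whole conjunction is. First I would observe that checking $\Cscr \sim \Cscr'$ reduces, via Proposition~\ref{prop:finite-traces}, to comparing the finite sets $\Oscr(\Cscr)$ and $\Oscr(\Cscr')$, which are computable: each configuration has finitely many traces, each trace has a finite observable, and the rewrite rules of Figure~\ref{fig:os-timed} are effective (the satisfiability side-conditions on $\EQ$ and $\TC$ are themselves decidable, by Lemma~\ref{lem:eqCheck} and by SMT over the chosen time theory). Hence $\Cscr \preceq \Cscr'$ amounts to: for each $\Oscr \in \Oscr(\Cscr)$ (finitely many), decide whether some $\Oscr' \in \Oscr(\Cscr')$ (finitely many) satisfies $\Oscr \sim \Oscr'$. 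So it suffices to show $\Oscr \sim \Oscr'$ is decidable for fixed observables.

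Next I would walk the five conditions of Definition~\ref{def:equiv-obs}. Conditions~1 and~2 (same length, same label polarities) are trivially decidable by inspection. Condition~3 (existence of a black-box bijection) is decidable because $\BB(\Oscr)$ and $\BB(\Oscr')$ are finite sets and, as noted in the footnote to Definition~\ref{def:black-box-bij}, one can compute a candidate bijection by a single recursive traversal of $\tup{\ms_1,\ldots,\ms_n}|_{\BB(\Oscr)}$ against $\tup{\ms_1',\ldots,\ms_n'}|_{\BB(\Oscr)}$ and check equality; at worst one enumerates the finitely many bijections. Condition~4 (term equivalence, $\tup{\ms_1,\ldots,\ms_N} \sim_{\Oscr,\Oscr'} \tup{\ms_1',\ldots,\ms_N'}$) is decidable by Theorem~\ref{th:branching-obs}: $\ms \preceq_{\Oscr,\Oscr'} \ms'$ holds iff $\termEqApprox(\ms,\ms',\Oscr,\Oscr')$ holds, and $\termEqApprox$ is a terminating procedure — it calls $\eqCheck$ (finite unification plus disequality and $\DC$-consistency checks), then $\termApprox$ (a finite matching-substitution search followed by $\symDer$, which recurses structurally on finite symbolic terms over the acyclic $\DC$ — acyclicity from Proposition~\ref{prop:trace-properties} guarantees the $\symDer$ recursion through $\dc(\sym,\Sscr)\in\DC$ terminates), and finally $\canEq$ (again finite unification and finite checks). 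Condition~5 (timing) is decidable because, as shown in Section~\ref{subsec-timing}, each of the two $\forall\exists$ tautology checks is equivalent to unsatisfiability of an $\exists\forall$ formula over the linear real arithmetic theory, which is decidable and handled by Yices~\cite{duterte15smt}.

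Finally I would assemble: a finite conjunction/disjunction of decidable predicates is decidable, so $\Oscr \sim \Oscr'$ is decidable, hence $\Cscr \preceq \Cscr'$ and (symmetrically) $\Cscr \sim \Cscr'$ are decidable. The main obstacle is Condition~4: one has to be careful that the mutually recursive procedures $\termEqApprox$, $\termApprox$, $\symDer$, $\canEq$ all terminate. The key facts making this go through are that symbolic terms are finite trees, that $\DC$ is acyclic (Proposition~\ref{prop:trace-properties}), so $\symDer$ cannot loop while chasing symbol-to-set dependencies, and that every unification/matching step produces only finitely many candidate substitutions. Condition~5's decidability, while conceptually the most ``interesting'', is essentially imported wholesale from~\cite{duterte15smt}, so it is not where the real work lies.
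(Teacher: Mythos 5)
Your proposal is correct and follows essentially the same route as the paper: the paper's own justification for this lemma is precisely that conditions 1--4 of Definition~\ref{def:equiv-obs} are decided by the finite checks and terminating procedures already established ($\termEqApprox$ via Theorem~\ref{th:branching-obs}, the black-box bijection by finite enumeration), while condition~5 reduces to unsatisfiability of an $\exists\forall$ formula over the reals, which is decidable by the cited SMT results. The only cosmetic remark is that your opening paragraph (reducing configuration equivalence to finitely many observable comparisons via Proposition~\ref{prop:finite-traces}) is really the content of the subsequent theorem rather than of this lemma, which concerns two fixed observables.
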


Together with Proposition~\ref{prop:finite-traces}, we obtain the decidability of timed observational equivalence checking.

\begin{theorem}
Determining whether two configurations are time observationally equivalent is decidable.
\end{theorem}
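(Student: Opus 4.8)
The plan is to assemble the decidability result from two ingredients already established in the excerpt: Proposition~\ref{prop:finite-traces}, which guarantees that each configuration $\Cscr$ has only finitely many symbolic traces (hence finitely many observables in $\Oscr(\Cscr)$), and the preceding Lemma, which states that checking equivalence of a \emph{single} pair of timed observables is decidable. First I would unfold the definition of $\Cscr \sim \Cscr'$: it amounts to the conjunction of $\Cscr \preceq \Cscr'$ and $\Cscr' \preceq \Cscr$, and $\Cscr \preceq \Cscr'$ is the statement that for every $\Oscr \in \Oscr(\Cscr)$ there is some $\Oscr' \in \Oscr(\Cscr')$ with $\Oscr \sim \Oscr'$. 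By Proposition~\ref{prop:finite-traces} both $\Oscr(\Cscr)$ and $\Oscr(\Cscr')$ are finite and, since traces are effectively enumerable by running the rewrite rules of Figure~\ref{fig:os-timed} (each rule application being effective, with the SMT consistency check decidable), these finite sets are in fact computable. So the quantifier alternation ``$\forall \Oscr\, \exists \Oscr'$'' ranges over finite computable sets and can be decided by brute-force search, provided each inner test $\Oscr \sim \Oscr'$ is decidable.

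Next I would justify that the inner test $\Oscr \sim \Oscr'$ is decidable by walking through the five conditions of Definition~\ref{def:equiv-obs}. Conditions~1 and~2 (equal length, matching label polarities) are trivially decidable by inspection. Condition~\ref{defitem:blackbox} asks for the existence of a black-box bijection; since $\BB(\Oscr)$ and $\BB(\Oscr')$ are finite sets (subterms of the finitely many labels), there are only finitely many candidate bijections, and for each the equality of the $*$-restricted tuples is a syntactic check — so this is decidable, and moreover it produces the finitely many bijections $\bij$ to feed to the remaining conditions. Condition~\ref{defitem:termapprox} (term equivalence of the observed message tuples) is decidable by Theorem~\ref{th:branching-obs} together with Lemma~\ref{lem:eqCheck}: the predicate $\termEqApprox$ is a terminating recursive procedure (it recurses on the structure of finite terms and on the acyclic dependency graph $\Gscr_\DC$, which by Proposition~\ref{prop:trace-properties} is always a DAG), and it is sound and complete for $\preceq_{\Oscr,\Oscr'}$. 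Condition~\ref{defitem:timeapprox} is handled by the preceding Lemma: the two $\forall\exists$ tautology checks over linear real arithmetic are reduced, as shown in Section~\ref{subsec-timing}, to unsatisfiability of $\exists\forall$ formulas in linear real arithmetic, which is decidable and supported by Yices~\cite{duterte15smt}.

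Finally I would combine these: to decide $\Cscr \sim \Cscr'$, compute $\Oscr(\Cscr)$ and $\Oscr(\Cscr')$ (finite, by Proposition~\ref{prop:finite-traces}), and for the $\preceq$ direction iterate over each $\Oscr \in \Oscr(\Cscr)$, searching for a matching $\Oscr' \in \Oscr(\Cscr')$ by testing the five conditions above — each test decidable, and only finitely many $(\Oscr,\Oscr')$ pairs and finitely many candidate black-box bijections per pair. Symmetrically for $\Cscr' \preceq \Cscr$. The conjunction of the two searches decides $\sim$. The main obstacle — and the place where I would be most careful — is Condition~\ref{defitem:timeapprox}: I need the $\TC$ and $\TC'$ appearing in observables of traces from $\Cscr$ and $\Cscr'$ to lie in a decidable fragment (linear real arithmetic), which requires that the time expressions actually generated by the operational semantics stay within the fragment that the $\exists\forall$ decision procedure of~\cite{duterte15smt} handles; the paper restricts to SMT-supported symbols with all time variables of type \texttt{Reals}, so this holds, but it is the hypothesis on which the whole argument rests. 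A secondary subtlety is ensuring termination of $\termEqApprox$, which relies on the acyclicity of $\DC$ from Proposition~\ref{prop:trace-properties} — without it the recursion in $\symDer$ could loop.
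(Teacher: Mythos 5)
Your proposal is correct and follows essentially the same route as the paper, which obtains the theorem by combining Proposition~\ref{prop:finite-traces} (finitely many traces, hence finitely many observables per configuration) with the preceding lemma on decidability of equivalence for a single pair of timed observables, and then deciding the finite $\forall\exists$ search in both directions. Your additional walkthrough of the five conditions of Definition~\ref{def:equiv-obs} only makes explicit what the paper leaves implicit; no gap.
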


\section{Experimental Results}
\label{sec:exp}
\begin{figure}
\begin{center}
  \includegraphics[width=0.35\textwidth]{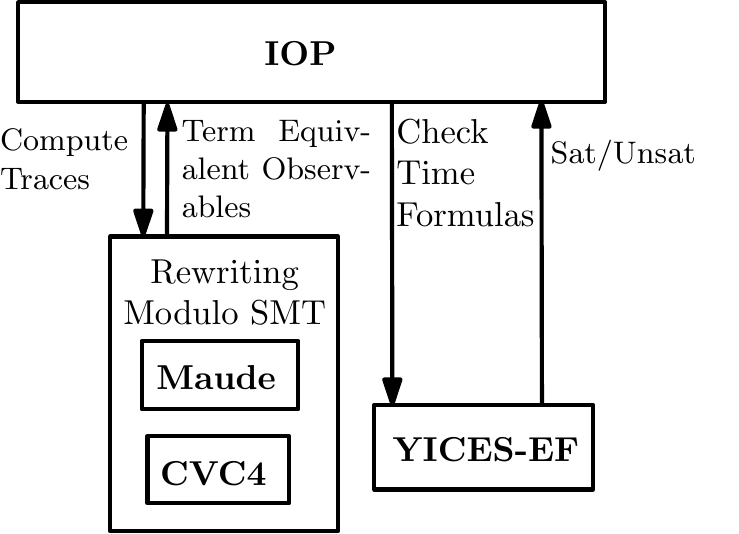}
\end{center}
  \caption{Timed Observational Equivalence Solver Architecture.}
  \label{fig:solver}
  \vspace{-2mm}
\end{figure}
We implemented a tool that checks for timed observational equivalence. Its architechture is depicted in Figure~\ref{fig:solver}. It is constructed over the tools:
\begin{itemize}
  \item \textbf{Maude:} We implemented in Maude all the machinery necessary specifying timed protocols as well as checking the term equivalence of observables. Moreover, since its alpha 111 version, Maude allows to call the CVC4 SMT solver~\cite{barrett11cvc4} from inside Maude programs. This allows the implementation of Rewriting Modulo SMT by using conditional rewrite rules that are only allowed to rewrite if the resulting constraint set is satisfiable. For our applications, this means the set of time constraints;

  \item \textbf{YICES-EF:} Since CVC4 does not provide the API for checking $\exists \forall$ formulas needed for proving time equivalence, we integrated our Maude machinery with the solver YICES-EF which can check for the satisfiability of such formulas. This integration has been carried out by using the IOP framework~\cite{mason-talcott-04wrla}.
\end{itemize}

Verification is coordinated by the IOP implementation. Given two initial configuration $\Cscr$ and $\Cscr'$ to be checked for their timed observational equivalence, IOP sends a command to the Maude+CVC4 tool to enumerate all obsevables for $\Cscr$ and $\Cscr'$, computes for each observable $\Oscr$ of $\Cscr$ the set of term-equivalent observables $\{\Oscr_1',\ldots, \Oscr_n'\}$ of $\Cscr'$ and vice-versa. If Maude finds some observable of $\Cscr$ that does not have at least one matching pair, that is, $n = 0$, $\Cscr$ and $\Cscr'$ are not equivalent. (Similarly for some pair of $\Cscr'$.) Otherwise, we continue by the timing equivalence condition. IOP attempts to find at least one time equivalent observable $\Oscr_i'$ for $\Oscr$. If it does not find it, then $\Cscr$ and $\Cscr'$ are not equivalent. It does this checking by building the formula as described in Section~\ref{subsec-timing} for checking for the timing equivalence of $\Oscr$ with $\Oscr_i'$. If YICES-EF returns Unsat, they are equivalent. Otherwise they are not equivalent and IOP tries with another observable.

\paragraph{Experimental Results} We carried out the following experiments:





\begin{itemize}
  \item \textbf{Red Pill Example}: Consider the timed protocol specified in Example~\ref{ex:red-pill}. We checked whether it is possible for an intruder to distinguish whether an application is running over a virtual machine or not. That is, we checked whether an initial configuration with a player running an application over a virtual machine is timed equivalent to the initial configuration with a player running the same application over a non-virtual machine. 

  \item \textbf{Passport Example}: Consider the timed protocol specified in Example~\ref{ex:passport}. We checked whether the intruder can distinguish the following two configurations both with two protocol sessions: the first where both protocol sessions are carried out with the same passport and the second where the protocol sessions are carried out with different passports. 

  \item \textbf{Corrected Passport Example}: We additionally considered a modification of the Passport example where the timed protocol is corrected in the sense that it sends necessarily both error messages at the same time. 

  \item \textbf{Anonymous Protocol}: Consider the timed protocol specified in Example~\ref{ex:anonymous}. We checked whether it is possible for an intruder to distinguish whether two players belong to the same group or not.  That is, we checked whether the initial configuration with a player that receives a message from a member of the same group is timed equivalent to the initial configuration with a player that receives a message from a player of a different group.
\end{itemize}

\begin{table}[t]
\begin{center}
\begin{tabular}{@{\quad}c@{\quad}|@{\quad}c@{\quad}|@{\quad}c@{\quad}|@{\quad}c@{\quad}}
\toprule
\textbf{Scenario} & \textbf{Result} & \textbf{Observables} & \textbf{States} \\
\midrule
Red-Pill & Not Equiv & 19/19 & 74/74\\
Passport & Not Equiv & 36/27 & 138/112\\
Passport-Corrected & Equiv & 36/27 & 138/112\\
Anonymous & Not Equiv & 2/3 &  7/9\\
\bottomrule
\end{tabular}
\end{center}
\caption{Experimental Results. Each experiment involves the proving the timed observational equivalence of two configurations. It contains number of observables (traces) for each configuration and the total number of states in the whole search tree required to traverse to enumerate all observables.}
\label{tab:exp}
\vspace{-8mm}
\end{table}

Table~\ref{tab:exp} summarizes the results of our experiments. Our tool was able to (correctly) identify the cases when the given configurations are timed observational equivalent. More impressive, however, is the number of states and observables it needed to traverse for doing so. In all experiments the number of states in the whole search tree was less than 140 states and the number of observables were less than 40. This is a very small number when compared to usual applications in Maude (which can handle thousands of states even when using Rewriting Modulo SMT~\cite{nigam16esorics}). This demonstrate the advantage of representing timing symbolically. As expected the number of observable for the passport example were greater as its configurations had two protocol session, while in the remaining experiments configurations have only one protocol session. Finally, since the number of observables was small, the number of calls to Yices was small and therefore, verification for all experiments took less than some seconds.

\section{Related and Future Work}
\label{related}
\label{sec:related}

This paper introduced a novel definition of timed equivalence for security protocols using symbolic time constraints. We demonstrated how symbolic time equivalence can be proved automatically with the use of Rewriting Modulo SMT and existing SMT-solvers. The combination of such constraints with Rewriting Modulo SMT greatly reduces the number of states required to enumerate all traces. We implemente the machinery for proving the timed observational equivalence and and showed experimentally with some proof-of-concept examples that our technique is practical.

For future work, we will be integrating the machinery developed here with the Maude libraries~\cite{duran16ijcar} which can be used for the verification of security protocols that use weaker notions of encryption. We are also interested in integrating the machinery developed with Narrowing and Maude-NPA.

\paragraph{Related Work:} The literature on symbolic verification is vast~\cite{basin-etal-2004jis,cortier09csf,cervesato99csfw,escobar07fosad,cortier09csf,gazeau17esorics}. However, most of this work uses symbolic reasoning for proving reachability properties. 

One exception is the work of~\cite{cortier09csf}. Indeed, for the observational equivalence involving terms, we have been heavily inspired by~\cite{cortier09csf}, but there are some differences. The main difference is that our timed protocols includes both time symbols and branching. Also, we also implemented our machinery for term equivalence in Maude and use SMT-solvers for search (Rewriting Modulo SMT) and proving the timing equivalence.

Cheval and Cortier~\cite{CC-post15} propose a definition of timed equivalence reducing it to other notions of equivalence taking into account the length of messages. We take a different approach by using timed constraints and SMT-Solvers. This allows us to relate time symbols using inequalities, \eg, $\tv_1 \geq \tv_2 + 10$. 
While a more detailed comparison is left to future work, the use time symbols allows the use of off-the-shelf SMT-solvers which are constantly improving. Indeed, as reported in~\cite{CC-post15}, the corrected version of the Passport Example did not terminate after two days of experiments. 

The recent work~\cite{gazeau17esorics} demonstrates how to automate the prove of observational equivalence of protocols that may contain branching and xor. While we allow for branching, we do not consider theories involving xor. However, we do consider timing aspects, which is not considered in~\cite{gazeau17esorics}. Thus these work are complementary. As described above, we expect in the future to support xor (and other equational theories) by using the built-in Maude matching and unification functionality~\cite{duran16ijcar}.

Finally, there have been other frameworks for the verification of timing properties of systems~\cite{bella98esorics,evans00esorics,gorrieri03esop,jakubowska07fi,kanovich12rta}. A main difference is that the properties verified were reachability properties and not timed equivalence. 


\newpage

\bibliographystyle{abbrv}
\bibliography{clt}

\begin{thebibliography}{10}

\bibitem{abadi04tcs}
M.~Abadi and C.~Fournet.
\newblock Private authentication.
\newblock {\em Theor. Comput. Sci.}, 322(3):427--476, 2004.

\bibitem{agha-mason-smith-talcott-96jfp}
G.~Agha, I.~A. Mason, S.~F. Smith, and C.~L. Talcott.
\newblock A foundation for actor computation.
\newblock {\em Journal of Functional Programming}, 7:1--72, 1997.

\bibitem{arapinis10csf}
M.~Arapinis, T.~Chothia, E.~Ritter, and M.~Ryan.
\newblock Analysing unlinkability and anonymity using the applied pi calculus.
\newblock In {\em Proceedings of the 23rd {IEEE} Computer Security Foundations
  Symposium, {CSF} 2010, Edinburgh, United Kingdom, July 17-19, 2010}, pages
  107--121, 2010.

\bibitem{barrett11cvc4}
C.~Barrett, C.~L. Conway, M.~Deters, L.~Hadarean, D.~Jovanovic, T.~King,
  A.~Reynolds, and C.~Tinelli.
\newblock {CVC4}.
\newblock In {\em Computer Aided Verification - 23rd International Conference,
  {CAV} 2011, Snowbird, UT, USA, July 14-20, 2011. Proceedings}, pages
  171--177, 2011.

\bibitem{basin-etal-2004jis}
D.~Basin and L.~V. Sebastian~M{\"o}dersheim.
\newblock {OFMC}: A symbolic model checker for security protocols.
\newblock {\em Interational Journal of Information Security}, 2004.

\bibitem{bella98esorics}
G.~Bella and L.~C. Paulson.
\newblock Kerberos version 4: Inductive analysis of the secrecy goals.
\newblock In {\em Computer Security - {ESORICS} 98, 5th European Symposium on
  Research in Computer Security, Louvain-la-Neuve, Belgium, September 16-18,
  1998, Proceedings}, pages 361--375, 1998.

\bibitem{hofmann16ppdp}
N.~Benton, M.~Hofmann, and V.~Nigam.
\newblock Effect-dependent transformations for concurrent programs.
\newblock In {\em Proceedings of the 18th International Symposium on Principles
  and Practice of Declarative Programming, Edinburgh, United Kingdom, September
  5-7, 2016}, pages 188--201, 2016.

\bibitem{brands93eurocrypt}
S.~Brands and D.~Chaum.
\newblock Distance-bounding protocols (extended abstract).
\newblock In {\em EUROCRYPT}, pages 344--359, 1993.

\bibitem{cervesato99csfw}
I.~Cervesato, N.~A. Durgin, P.~Lincoln, J.~C. Mitchell, and A.~Scedrov.
\newblock A meta-notation for protocol analysis.
\newblock In {\em CSFW}, pages 55--69, 1999.

\bibitem{CC-post15}
V.~Cheval and V.~Cortier.
\newblock Timing attacks: symbolic framework and proof techniques.
\newblock In R.~Focardi and A.~Myers, editors, {\em {P}roceedings of the 4th
  {I}nternational {C}onference on {P}rinciples of {S}ecurity and {T}rust
  ({POST}'15)}, Lecture Notes in Computer Science, London, UK, Apr. 2015.
  Springer Berlin Heidelberg.
\newblock To appear.

\bibitem{chothia10fc}
T.~Chothia and V.~Smirnov.
\newblock A traceability attack against e-passports.
\newblock In {\em Financial Cryptography and Data Security, 14th International
  Conference, {FC} 2010, Tenerife, Canary Islands, January 25-28, 2010, Revised
  Selected Papers}, pages 20--34, 2010.

\bibitem{clavel-etal-07maudebook}
M.~Clavel, F.~Dur\'an, S.~Eker, P.~Lincoln, N.~Mart\'i-Oliet, J.~Meseguer, and
  C.~Talcott.
\newblock {\em All About Maude: A High-Performance Logical Framework}, volume
  4350 of {\em LNCS}.
\newblock Springer, 2007.

\bibitem{corin04fmse}
R.~Corin, S.~Etalle, P.~H. Hartel, and A.~Mader.
\newblock Timed model checking of security protocols.
\newblock In {\em Proceedings of the 2004 ACM Workshop on Formal Methods in
  Security Engineering}, FMSE '04, pages 23--32, New York, NY, USA, 2004. ACM.

\bibitem{cortier09csf}
V.~Cortier and S.~Delaune.
\newblock A method for proving observational equivalence.
\newblock In {\em Proceedings of the 22nd {IEEE} Computer Security Foundations
  Symposium, {CSF} 2009, Port Jefferson, New York, USA, July 8-10, 2009}, pages
  266--276, 2009.

\bibitem{DY}
D.~Dolev and A.~Yao.
\newblock On the security of public key protocols.
\newblock {\em IEEE Transactions on Information Theory}, 29(2):198--208, 1983.

\bibitem{duran16ijcar}
F.~Dur{\'{a}}n, S.~Eker, S.~Escobar, N.~Mart{\'{\i}}{-}Oliet, J.~Meseguer, and
  C.~L. Talcott.
\newblock Built-in variant generation and unification, and their applications
  in maude 2.7.
\newblock In {\em Automated Reasoning - 8th International Joint Conference,
  {IJCAR} 2016, Coimbra, Portugal, June 27 - July 2, 2016, Proceedings}, pages
  183--192, 2016.

\bibitem{duterte15smt}
B.~Dutertre.
\newblock Solving exists/forall problems with yices.
\newblock In {\em {SMT}}, 2015.

\bibitem{escobar07fosad}
S.~Escobar, C.~A. Meadows, and J.~Meseguer.
\newblock Maude-npa: Cryptographic protocol analysis modulo equational
  properties.
\newblock In {\em Foundations of Security Analysis and Design V, {FOSAD}
  2007/2008/2009 Tutorial Lectures}, pages 1--50, 2007.

\bibitem{evans00esorics}
N.~Evans and S.~Schneider.
\newblock Analysing time dependent security properties in {CSP} using {PVS}.
\newblock In {\em Computer Security - {ESORICS} 2000, 6th European Symposium on
  Research in Computer Security, Toulouse, France, October 4-6, 2000,
  Proceedings}, pages 222--237, 2000.

\bibitem{gazeau17esorics}
I.~Gazeau and S.~Kremer.
\newblock Automated analysis of equivalence properties for security protocols
  using else branches.
\newblock In {\em Computer Security - {ESORICS} 2017 - 22nd European Symposium
  on Research in Computer Security, Oslo, Norway, September 11-15, 2017,
  Proceedings, Part {II}}, pages 1--20, 2017.

\bibitem{gorrieri03esop}
R.~Gorrieri, E.~Locatelli, and F.~Martinelli.
\newblock A simple language for real-time cryptographic protocol analysis.
\newblock In {\em Proceedings of the 12th European Conference on Programming},
  ESOP'03, pages 114--128, Berlin, Heidelberg, 2003. Springer-Verlag.

\bibitem{gunter}
C.~A. Gunter.
\newblock {\em Semantics of programming languages - structures and techniques}.
\newblock Foundations of computing. {MIT} Press, 1993.

\bibitem{HoBBP14woot}
G.~Ho, D.~Boneh, L.~Ballard, and N.~Provos.
\newblock Tick tock: Building browser red pills from timing side channels.
\newblock In S.~Bratus and F.~F.~X. Lindner, editors, {\em 8th {USENIX}
  Workshop on Offensive Technologies, {WOOT} '14}, 2014.

\bibitem{jakubowska07fi}
G.~Jakubowska and W.~Penczek.
\newblock Modelling and checking timed authentication of security protocols.
\newblock {\em Fundam. Inf.}, 79(3-4):363--378, Aug. 2007.

\bibitem{kanovich14fccfcs}
M.~I. Kanovich, T.~B. Kirigin, V.~Nigam, A.~Scedrov, and C.~L. Talcott.
\newblock Towards timed models for cyber-physical security protocols.
\newblock Available in Nigam's homepage, 2014.

\bibitem{kanovich12rta}
M.~I. Kanovich, T.~B. Kirigin, V.~Nigam, A.~Scedrov, C.~L. Talcott, and
  R.~Perovic.
\newblock A rewriting framework for activities subject to regulations.
\newblock In {\em 23rd International Conference on Rewriting Techniques and
  Applications (RTA'12) , {RTA} 2012, May 28 - June 2, 2012, Nagoya, Japan},
  pages 305--322, 2012.

\bibitem{mason-talcott-04wrla}
I.~A. Mason and C.~L. Talcott.
\newblock {IOP: The InterOperability Platform} \& {IMaude}: An interactive
  extension of {Maude}.
\newblock In {\em Fifth International Workshop on Rewriting Logic and Its
  Applications (WRLA'2004)}, Electronic Notes in Theoretical Computer Science.
  Elsevier, 2004.

\bibitem{milner}
R.~Milner.
\newblock {\em Communicating and mobile systems - the Pi-calculus}.
\newblock Cambridge University Press, 1999.

\bibitem{ns}
R.~M. Needham and M.~D. Schroeder.
\newblock Using encryption for authentication in large networks of computers.
\newblock {\em Commun. ACM}, 21(12):993--999, 1978.

\bibitem{nigam16esorics}
V.~Nigam, C.~L. Talcott, and A.~A. Urquiza.
\newblock Towards the automated verification of cyber-physical security
  protocols: Bounding the number of timed intruders.
\newblock In {\em Computer Security - {ESORICS} 2016 - 21st European Symposium
  on Research in Computer Security, Heraklion, Greece, September 26-30, 2016,
  Proceedings, Part {II}}, pages 450--470, 2016.

\bibitem{rocha-2012}
C.~Rocha.
\newblock {\em Symbolic Reachability Analysis for Rewrite Theories}.
\newblock PhD thesis, University of Illinois at Urbana-Champagne, 2012.

\end{thebibliography}

\appendix

\subsection{Constraint Solver Algorithm}
We define the function $\sgen$ that will compute a finite set of such pairs such that every derivable instance is
covered by some pair.  In particular, $\sgen$ takes as input:
\begin{itemize}
  \item $\ms$ -- the symbolic term to be generated;
  \item $\IK$ -- the intruder knowledge;
  \item $\ssb$ -- a symbol substitution of symbols to symbolic terms, initially empty ($\mtssb$). Intuitively, these represent the symbols which are no longer constrained in the set of derivability constraints, but should be replaced by a particular symbolic term; 
  \item $\DC$ -- a set of derivability constraints which specify the symbols in $\ms$ and $\IK$ and the range of $\ssb$.
\end{itemize}

$\sgen(\ms,\IK,\ssb,\DC)$ returns a set solutions: pairs
\[
  \tup{\ssb_1, \DC_1}, \ldots, \tup{\ssb_n, \DC_n}
\]
where for $1 \leq i \leq n$, $\ssb_i$ is the symbol substitution with the symbols that have been resolved and $\DC_i$ a set of derivability constraints refining $\DC$.

$\sgen(\ms,\IK,\ssb,\DC)$ is defined as follows:\footnote{In our maude code, it is implemented as the function sGen1 in constraints.maude.}
\footnote{The freshly symbols are constrained only if necessary, to allow for the possibility that it only appears in a encryption that can be matched in $\IK$.}
\begin{itemize}
  \item $\ms$ is a symbol: 
  if $\ms \in \DC$ or $\ms$ has guessable type, then return $\{\ssb,\DC\}$,  
  otherwise the symbol must be constrained,
  return $\{\ssb,\DC \dc(\ms, \IK)\}$;
  \item $\ms$ is a nonce: if $\ms \in \IK$, then return $\{\ssb,\DC\}$, otherwise return the empty set (no solution);
  \item $\ms$ is a guessable constant, such as, player name, public keys, etc:  return $\{\ssb,\DC\}$;
  \item $\ms = \{\ms_1, \ldots, \ms_n\}$: iterate through the tuple elements accumulating symbol substitutions and symbol constraints;
  \item $\ms = \enc(\ms_1,\ms_2)$: There are two possibilities:
  \begin{itemize}
    \item The intruder is able to derive $\ms$: This is done by simply calling $\sgen$ on the tuple $\{\ms_1,\ms_2\}$;
    \item The intruder possesses an encryption term $\ms_1$ that is unifiable with $\ms$. For this, we proceed in two steps. First, we treat symbols as variables and attempt to unify $\ms$ with terms in $\IK$, checking for cycles (occur-checks). Once all unifications (symbol substitutions) are found, we only keep the unifications that are consistent with $\DC$. This is done by accumulating the constraints returned by the function $\checkSubst(\ssb_i,\ssb,\ds)$, defined below, for each
unification $\ssb_i$.  Note that the domains of $\ssb_i$ and $\ssb$ are disjoint.
\end{itemize}
\end{itemize}

\noindent
The function $\checkSubst(\ssb_0,\ssb,\DC)$
the set of consistent refinements of $\{\ssb_0,\ssb,\DC\}$.
It processes a binding, $\sym_i \mapsto \ms_i$, in $\ssb_0$ in the context of a pair $\{\ssb,\DC\}$ producing the set of solutions $\css = \{\ssb_j,\DC_j\}, 1 \le j \le k$ that refine $\{\ssb,\DC\}$, and bind $\sym_i$ to an
instance of $\ms_i$ using $\checkBnd(\sym_i,\ms_i,\ssb,\DC)$. 
The next binding is processed
in the context of each element of $\css$ produced
by processing the preceeding bindings.  

The function $\checkBnd(\sym,\ms,\ssb,\DC)$ works
as follows
\begin{itemize}
  \item if $\sym$ is not constrained in $\DC$, then  
  $\checkBnd$ 
returns $\{(\sym \mapsto \ms) \ssb', \DC'\}$ where $\ssb'$ is the result of applying $\sym \mapsto \ms$ to the range of $\ssb$,
and $\DC'$ is the result of applying $\sym \mapsto \ms$
to the terms in $\DC$;

 \item if $\sym$ is constrained in $\DC$, and $\ms$ 
 is not a symbol then, then $\checkBnd$ must ensure
 that $\ms$ is derivable under the accumulated
 constraints. This can be done by 
adding the binding $\sym \mapsto \ms$ to each
solution returned by $\sgen(\ms,\IK,\ssb_1,\DC_1)$.
Here $\DC = \DC_0 dc(\sym,\IK)$, $\ssb_1$ is the result of applying $\sym \mapsto \ms$ to the range of $\ssb$,
and $\DC_1$ is the result of applying $\sym \mapsto \ms$
to the terms in $\DC_0$;
 
\item 
if $\sym$ is constrained in $\DC$, and $\ms$ is a symbol 
that is not constrained in $\DC$ then return
$\{(\sym \mapsto \ms) \ssb_1,\DC_1 dc(\ms,\IK)\}$ where $\ssb_1$, $\DC_1, \IK$
are as above.

\item 
if $\sym$ is constrained in $\DC$, and $\ms$ is a symbol 
that is also constrained in $\DC$ then 
return
$\{(\sym \mapsto \ms) \ssb_1,\DC_1 dc(\ms,\IK')\}$ where $\ssb_1$, $\DC_1$ are as above and $\IK'$ is the constraint
of the earlier of $\sym,\ms$.   One symbol is earlier
than another if if was generated in an earlier step
in the protocol exeution.  Intruder knowledge patterns increase over time, thus we are effectively restricting
to the lesser of the knowledge sets.
\end{itemize}

\subsection{Proof of Theorem~\ref{th:basic-obs}}

\begin{lemma}
\label{lemma:sym-approx}
  Let $\ms$ be a term in $\Oscr$ and $\sym'$ a symbol in $\Oscr'$. 
  $\ms \preceq_{\Oscr,\Oscr'} \sym'$ if and only if $\symDer(\sym',\ms,\DC,\DC')$.
\end{lemma}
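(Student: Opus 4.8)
The plan is to prove the biconditional by well-founded induction on $\ms$, following the case analysis of Definition~\ref{def:sym-approx}. Because the right-hand side is a single symbol $\sym'$ with $\dc(\sym',\Sscr')\in\DC'$, the statement $\ms\preceq_{\Oscr,\Oscr'}\sym'$ reduces to $\DC(\ms)\subseteq\DC'(\sym')$ (the $\EQ$- and $\EQ'$-restrictions impose no extra condition on a symbol that does not occur in a comparison constraint, and in any case are handled separately in $\termEqApprox$). So I will establish $\DC(\ms)\subseteq\DC'(\sym')\iff\symDer(\sym',\ms,\DC,\DC')$, from which Theorem~\ref{th:basic-obs} follows by instantiating the lemma along the matching substitution $\theta$ of Definition~\ref{def:term-approx}.

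For the induction I use the lexicographic measure $\mu(\ms)=(h_\DC(\ms),|\ms|)$, where $h_\DC(\ms)$ is the largest $\Gscr_\DC$-height of a symbol occurring in $\ms$ (zero if there is none) and $|\ms|$ is term size; in the encryption clause I compose the two unfoldings of Definition~\ref{def:sym-approx} so the recursion passes directly to the two strictly smaller components of $\enc(\ms_1,\k)$. Acyclicity of $\DC$ (Proposition~\ref{prop:trace-properties}) makes $\mu$ decrease in the symbol case $\ms=\sym$, $\dc(\sym,\Sscr)\in\DC$, since every element of $\Sscr$ has an edge to $\sym$ in $\Gscr_\DC$ and hence contains only symbols of strictly smaller height; in the tuple and encryption cases $\mu$ decreases because the immediate subterms are smaller and introduce no new symbols.

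For the $(\Leftarrow)$ direction I unfold $\DC(\ms)$ via Definition~\ref{def:basic-semantics-terms}: each ground $\m\in\DC(\ms)$ is obtained by recursively replacing every symbol by an element of the derivable closure of its constraint set and resolving it to ground, so $\m$ is assembled, using only encryption and tupling, from ground instances of the guessable/key/nonce leaves reachable from $\ms$. Each clause of $\symDer$ then mirrors a closure rule of Definition~\ref{def:minimal}: guessables, keys and nonces land in $\Sscr'\subseteq\DC'(\sym')$; a non-black-boxed encryption $\enc(\ms_1,\k)$ is transparent and equi-derivable with $\tup{\ms_1,\k}$, so the hypothesis on the tuple transfers; tuples and symbol-constraint sets are handled componentwise by the induction hypothesis, using that $\DC'(\sym')$, being a derivable closure, is itself closed under tupling and under encryption with a derivable key. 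For $(\Rightarrow)$ I argue contrapositively: if $\symDer(\sym',\ms,\DC,\DC')$ fails, I build a ground $\m\in\DC(\ms)\setminus\DC'(\sym')$ by choosing, at the topmost failing position of the recursion, a bare key/nonce absent from $\Sscr'$ — or, by the induction hypothesis, a witnessing ground instance for a subterm — and closing it back up with $\enc$ and $\tup$; non-membership is preserved because an assembled ground term lies in $\DC'(\sym')$ only if all of its components do.

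I expect the main difficulty to be the $(\Rightarrow)$ direction at bare keys and nonces: one must show that $\k\notin\Sscr'$ genuinely forces $\k\notin\DC'(\sym')$, even though $\Sscr'$ may contain symbols that resolve further. This rests on minimality of $\Sscr'$ — it contains no tuples, so the derivable closure produces a bare key or nonce only from a member of $\Sscr'$ itself — combined with the structural invariants of Proposition~\ref{prop:trace-properties} on the constraint sets occurring along traces (symbols appear in a constraint set only nested inside encryptions or tuples, never at top level), so that a bare key or nonce is in $\DC'(\sym')$ exactly when it is in $\Sscr'$. A related subtlety is the transparency step for a non-black-boxed encryption, namely that $\enc(\m_1,\k)\in\DC'(\sym')$ iff $\m_1,\k\in\DC'(\sym')$: this follows from the generation rules of Definition~\ref{def:minimal} once one checks, using the black-box preprocessing that aligns $\BB(\Oscr)$ with $\BB(\Oscr')$, that such an encryption cannot enter $\DC'(\sym')$ as an opaque leaf.
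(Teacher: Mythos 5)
Your proposal is correct and follows essentially the same route as the paper: the paper proves the lemma by an outer induction on the greatest $\Gscr_\DC$-height of the symbols occurring in $\ms$ (using acyclicity so that the symbols in $\Sscr$ for $\dc(\sym,\Sscr)\in\DC$ have strictly smaller height) with an inner structural induction on $\ms$, which is exactly your lexicographic measure $(h_\DC(\ms),|\ms|)$, and the same case split mirroring Definition~\ref{def:sym-approx}. Your explicit contrapositive witness construction for the forward direction and the discussion of bare keys/nonces versus $\Sscr'$ are more careful elaborations of steps the paper's proof states only as ``if and only if'' chains, not a different argument.
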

\begin{proof}
  We prove by induction on the greatest height, $h$, of the symbols in $\ms$ in the  the dependency graph of the (acyclic) $\DC$. In the following assume $\dc(\sym', \Sscr') \in \DC$.
 \begin{enumerate}
   \item
   \label{proofitem:sym-approx-base}
    \textbf{Base Case:} If $h = 0$, that is, there are no symbols in $\ms$, that is, it is ground. We proceed by induction on $\ms$. 
   \begin{enumerate}
     \item \textbf{Base Case 1:} If $\ms$ is a guessable, then $\ms \preceq_{\Oscr,\Oscr'} \sym'$ and $\symDer(\sym',\ms,\DC,\DC')$ are both true;
     \item \textbf{Base Case 2:} If $\ms$ is a nonce, then there are two cases. Either, this nonce comes from a bijection, which means that $\ms \preceq_{\Oscr,\Oscr'} \sym'$ if and only if $\n \in \DC(\sym')$ if and only if $\n \in \Sscr'$ if and only if $\symDer(\sym',\ms,\DC,\DC')$. Otherwise, $\n \in \DC(\sym')$ if and only if $\symDer(\sym',\ms,\DC,\DC')$;
     \item \textbf{Inductive Case:} If $\ms$ is a tuple $\tup{\m_1, \ldots, \m_n}$, then $\ms \preceq_{\Oscr,\Oscr'} \sym'$ if and only if $\m_i \preceq_{\Oscr,\Oscr'} \sym'$ if and only if (by IH) $\symDer(\sym',\m_i,\DC,\DC')$ for all $1 \leq i \leq n$.
    \item \textbf{Inductive Case:} If $\ms$ is an encryption $\enc(\m,\k)$, then either it is a black-box and follows the same reasoning as with the nonce case. Otherwise we appeal to the inductive hypothesis as with the tuple case.
   \end{enumerate}
   \item $\textbf{Inductive Case:}$ If $h = n + 1$ with $n \geq 0$, then $\ms$ contains some symbols. We proceed by induction on the size of $\ms$. Most of the cases have the same reasoning as before in the proof of the base case (Case~\ref{proofitem:sym-approx-base}) with the exception of the following base case:
   \begin{enumerate}
     \item \textbf{Base Case:} If $\ms$ is a symbol $\sym$. Let $\dc(\sym,\Sscr)$. All ground terms in $\DC(\sym,\Sscr)$ are built using the terms in $\Sscr$.
     As $\DC$ is acyclic, all symbols in $\Sscr$ have height of at most $n$. Thus we appeal to the IH to prove this case.
   \end{enumerate}
 \end{enumerate}
\end{proof}

\begin{proof}
We proceed by induction on the size of $\ms'$. 
\begin{itemize}
  \item \textbf{Base Case:} $\ms'$ is a symbol $\sym'$. The matching substitution is $\sym'\mapsto \ms$, which is handled by Lemma~\ref{lemma:sym-approx}. 

  \item \textbf{Other Base Cases} $\ms' = \n'$ is a nonce. Then $\ms \preceq_{\Oscr,\Oscr'} \n'$ if and only if $\ms = \bij[\n']$ if and only if $\termApprox(\ms,\ms',\Oscr,\Oscr')$. If it is a guessable, then $\ms$ has to be the same guessable and $\termApprox(\ms,\ms',\Oscr,\Oscr')$. Similar when $\ms'$ is a key.

  \item \textbf{Inductive Case:} $\ms' = \enc(\ms_1',\k')$ is an encrypted term. Then it is either black-boxed, in which case the proof is similar to the case when it is a nonce. Otherwise, we appeal to the IH on the smaller terms $\ms_1'$ and $\k'$;

  \item \textbf{Inductive Case:} $\ms = \{\ms_1',\ldots,\ms_n'\}$, then we appeal to the IH on the smaller terms $\ms_i'$.
\end{itemize}
\end{proof}

\subsection{Proof of Theorem~\ref{th:branching-obs}}

\begin{lemma}
\label{lemma:equality}
  Let $\EQ = \EQ_1 \cup \EQ_2$ be a set of comparison constraints, where $\EQ_1$ contains only equality constraints and $\EQ_2$ inequality constraints. Let $\theta$ be the most general unifier of all constraints in $\EQ_1$. For all ground terms $\m$, $\m \in \DC(\ms)|_\EQ$ if and only if $\m \in \DC(\theta[\ms])|_{\EQ_2}$ 
\end{lemma}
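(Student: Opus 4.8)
The plan is to unfold the definition of the restricted meaning $\DC(\cdot)|_\EQ$ and reduce the whole statement to the universal property of the most general unifier. Recall that $\m \in \DC(\ms)|_\EQ$ means there is a ground matching substitution $\sigma$ with $\sigma(\ms) = \m$, $\m \in \DC(\ms)$, $\sigma$ equating the two sides of every constraint in $\EQ_1$, and $\sigma$ separating the two sides of every constraint in $\EQ_2$. Since encryption and tupling are free constructors, syntactic first-order unification applies to the equality constraints and the mgu $\theta$ of $\EQ_1$ is well-defined (we take it idempotent). The key observation is then: a matching substitution $\sigma$ with $\sigma(\ms)=\m$ satisfies all equalities in $\EQ_1$ exactly when $\sigma$ is a unifier of $\EQ_1$, hence — by the universal property of $\theta$ — exactly when $\sigma$ factors as $\sigma = \rho \circ \theta$ for some substitution $\rho$ on the symbols occurring in $\ms$ and $\EQ$. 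So restricting by the equalities of $\EQ$ amounts precisely to pre-applying $\theta$.

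First I would do the forward direction: given $\sigma$ witnessing $\m \in \DC(\ms)|_\EQ$, take the factoring $\sigma = \rho\circ\theta$ and show $\rho$ witnesses $\m \in \DC(\theta[\ms])|_{\EQ_2}$. The matching equation $\rho(\theta[\ms]) = \sigma(\ms) = \m$ is immediate; membership $\m \in \DC(\theta[\ms])$ follows because every symbol left in $\theta[\ms]$, together with every symbol introduced by the range of $\theta$, is still constrained in $\DC$ (the equalities in $\EQ_1$ are generated by matching against terms of the configuration, so the range of $\theta$ lives inside terms already governed by $\DC$), and $\sigma$ already instantiated all of these to ground terms derivable from those constraints; and for each $\Neq(\ms_1,\ms_2) \in \EQ_2$ we have $\rho(\theta[\ms_i]) = \sigma(\ms_i)$, so $\sigma$-separation transfers to $\rho$-separation of $\theta[\ms_1],\theta[\ms_2]$. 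The backward direction is symmetric: from $\rho$ witnessing $\m \in \DC(\theta[\ms])|_{\EQ_2}$, set $\sigma = \rho\circ\theta$; then $\sigma$ is a unifier of $\EQ_1$ (because $\theta$ is), hence satisfies the $\EQ_1$-equalities; it satisfies the $\EQ_2$-inequalities since $\rho$ does; and $\sigma(\ms) = \m$ with $\m \in \DC(\ms)$.

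I expect the main obstacle to be the bookkeeping on the derivability side, namely the equivalence $\m \in \DC(\ms) \iff \m \in \DC(\theta[\ms])$ for the relevant ground $\m$: one must argue that applying the mgu $\theta$ neither loses nor spuriously creates derivable instances. This uses that $\theta$ only merges or instantiates symbols in ways already licensed by $\DC$, together with the compositional definition of $\DC(\cdot)$ through the operator $Sub_\DC$ and a routine induction on the structure of $\ms$ (and an inner induction following a topological sort of $\Gscr_\DC$, as in the proof of Lemma~\ref{lemma:sym-approx}). A secondary point to be careful about is the exact reading of $\EQ_2$ in the conclusion: a symbol unified away by $\theta$ no longer occurs in $\theta[\ms]$, so the inequality constraints must be understood up to the instantiation induced by $\theta$, i.e.\ as $\theta[\EQ_2]$; this is harmless, since $\theta$ is the identity on any symbol it does not need to unify, and on the symbols it does touch the identity $\rho(\theta[\ms_i]) = \sigma(\ms_i)$ makes the two readings coincide.
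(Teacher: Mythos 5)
Your proposal is correct and follows essentially the same route as the paper's proof, which is a short chain of equivalences resting on the same observation: a matching substitution satisfies the equalities in $\EQ_1$ exactly when it is an instance of (factors through) the mgu $\theta$, so restricting by $\EQ$ amounts to pre-applying $\theta$ and then restricting by the inequalities. You merely spell out the factoring $\sigma = \rho\circ\theta$ and the transfer of $\DC$-membership and of the $\Neq$ conditions, details the paper leaves implicit.
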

\begin{proof}
  $\m \in \DC(\ms)|_\EQ$ if and only if the pattern match $\ssb$ of $\m$ and $\ms$  does not falsify any constraint in $\EQ$ if and only if $\ssb$ is an instance of $\theta$ (as it is the m.g.u.) and $\ssb$ does not falsify any constraint in $\EQ_2$ if and only if $\m \in \DC(\theta(\ms))|_{\EQ_2}$.
\end{proof}

\begin{lemma}
\label{lemma:case2}
There is ground term $\m \in \DC(\ms_1)|_\EQ$ and $\m \in \DC(\ms_2)|_\EQ$ with the same witnessing substitution $\theta$ if and only if $\canEq(\ms_1,\ms_2,\DC,\EQ)$.
\end{lemma}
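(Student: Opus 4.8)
The plan is to prove the biconditional by translating between the ground ``witness'' picture on the left and the symbolic most general unifier $\sigma$ of Definition~\ref{def:neq-approx} on the right. As a preprocessing step I would invoke Lemma~\ref{lemma:equality} to absorb the equality constraints of $\EQ$ into a substitution, so that we may assume $\EQ$ contains only inequality constraints and Condition~2 of Definition~\ref{def:neq-approx} reduces to requiring that $\sigma$ not collapse any $\Neq(\cdot,\cdot)$ into a syntactic equality. Throughout, $\DC$ is acyclic (as is guaranteed for all configurations by Proposition~\ref{prop:trace-properties}), so $\Gscr_\DC$ admits a topological sort and the iterated substitution of Definition~\ref{def:basic-semantics-terms} is well defined and has ground terms as values.

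For the forward direction, suppose a ground $\m$ lies in $\DC(\ms_1)|_\EQ$ and in $\DC(\ms_2)|_\EQ$ with a common witnessing matching substitution $\theta$. Then $\theta[\ms_1]=\m=\theta[\ms_2]$, so $\theta$ unifies $\ms_1$ and $\ms_2$; hence an m.g.u.\ $\sigma$ exists with $\theta=\rho\circ\sigma$, which gives Condition~1. Since $\theta$ respects $\EQ$ and factors through $\sigma$, applying $\sigma$ cannot force any inequality constraint to become a syntactic equality (otherwise $\theta$ would too), giving Condition~2. For Condition~3, unfolding Definition~\ref{def:basic-semantics-terms} and the membership-checking procedure shows that for each binding $\sym_i\mapsto\ms_i^*$ of $\sigma$ we have $\theta(\sym_i)=\rho(\ms_i^*)\in\DC(\sym_i)$, with $\rho$ also serving as a consistent witness for the symbols inside $\ms_i^*$; hence $\ms_i^*$ itself is derivable from $\DC(\sym_i)$, i.e.\ ``$\sym_i$ can generate $\ms_i^*$'', the certificate being produced by the same $\symDer$-style recursion established in Lemma~\ref{lemma:sym-approx} and Theorem~\ref{th:basic-obs}. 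Conversely, assume $\canEq(\ms_1,\ms_2,\DC,\EQ)$ with m.g.u.\ $\sigma$. I would extend $\sigma$ to a fully ground substitution $\theta$ by processing the symbols still occurring in $\sigma[\ms_1]$ (equivalently $\sigma[\ms_2]$) in a topological order of $\Gscr_\DC$, replacing each by a ground term derivable from its derivability constraint: Condition~3 supplies such ground instances for the symbols introduced by $\sigma$, and since the lowest symbols carry ground constraints the recursion terminates. To keep every inequality constraint of $\EQ$ satisfied, whenever a choice would validate some $\Neq(\cdot,\cdot)$ we use the freedom afforded by the infiniteness of each $\DC(\sym)$ (forced by the tupling closure) to pick a different ground term; only finitely many constraints must be discharged, and Condition~2 guarantees none is already blocked. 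The resulting $\theta$ is a matching substitution with $\theta[\ms_1]=\theta[\ms_2]=\m$ ground and $\m\in\DC(\ms_1)|_\EQ\cap\DC(\ms_2)|_\EQ$, witnessed by the common $\theta$.

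The main obstacle will be making the equivalence between the symbolic and ground pictures precise: one must pin down exactly what ``$\sigma$ does not make a comparison constraint false'' and ``$\sym_i$ can generate $\ms_i^*$'' mean for the \emph{non-ground} $\sigma$, and show these are equivalent to the existence of a $\DC$-compatible ground completion — which is essentially an instance of the soundness/completeness already proved for $\termApprox$ and $\symDer$. Once that bridge is in place, both directions reduce to routine inductions on term structure and on the topological order of $\Gscr_\DC$, together with the infiniteness argument used to satisfy the inequality constraints.
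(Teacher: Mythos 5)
Your proof is correct and follows essentially the same route as the paper's: both establish the biconditional by identifying a common ground instance of $\ms_1$ and $\ms_2$ that respects $\EQ$ and $\DC$ with a unifier meeting the three conditions of Definition~\ref{def:neq-approx}. The paper compresses this into a single chain of equivalences, taking the ground witness $\theta$ itself as the unifier required by $\canEq$ (the definition asks only for \emph{a} unifier, not the m.g.u.), so your m.g.u.\ factorization and the explicit topological ground-completion via the infiniteness of the derivable sets are elaborations of steps the paper leaves implicit rather than a genuinely different argument.
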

\begin{proof}
  $\m \in \DC(\ms_1)|_\EQ$ and $\m \in \DC(\ms_2)|_\EQ$ with the same substitution $\theta$ if and only if $\theta(\ms_1) = \theta(\ms_2) = \m$ if and only if $\ms_1$ and $\ms_2$ can be unified by $\theta$ and $\theta$ satisfies all constraints in $\EQ$ and for all $\sym_i \mapsto \m_i^* \in \theta$, $\m_i \in \DC(\sym_i)$ (Definition of $\DC(\sym_i)$ membership) if and only if $\canEq(\ms_1,\ms_2,\DC,\EQ)$.
\end{proof}

\begin{lemma}
\label{lemma:case1}
Let $\EQ$ be a finite set of inequality constraints only. Assume $\DC \vDash \EQ$.
  For all ground terms $\m$ we have that $\m \in \DC(\ms)|_{\EQ} \Rightarrow \m \in \DC'(\ms')$ if and only if $\m \in \DC(\ms) \Rightarrow \m \in \DC'(\ms)$
\end{lemma}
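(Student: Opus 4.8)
The plan is to prove the two implications of the biconditional separately. The direction ``$\Leftarrow$'' is immediate: by the definition of $\DC(\ms)|_\EQ$ (the set of $\m' \in \DC(\ms)$ admitting a matching substitution that does not falsify any constraint of $\EQ$) we have $\DC(\ms)|_\EQ \subseteq \DC(\ms)$, so if $\m \in \DC(\ms) \Rightarrow \m \in \DC'(\ms')$ holds for every ground $\m$ it holds a fortiori for every $\m \in \DC(\ms)|_\EQ$. All the work is in the direction ``$\Rightarrow$'', and the guiding intuition is that, since $\EQ$ contains only inequalities and is satisfiable w.r.t.\ $\DC$ (the hypothesis $\DC \vDash \EQ$), passing from $\DC(\ms)$ to $\DC(\ms)|_\EQ$ only removes terms that are equivalent --- up to re-drawing of freshly chosen atoms --- to terms that survive, and such a re-drawing cannot affect membership in $\DC'(\ms')$.

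To make this precise for ``$\Rightarrow$'', assume $\m \in \DC(\ms)|_\EQ \Rightarrow \m \in \DC'(\ms')$ for all ground $\m$, and let $\m \in \DC(\ms)$ with witnessing matching substitution $\theta$, so $\theta(\ms) = \m$ and $\theta(\sym) \in \DC(\sym)$ for each symbol $\sym$ of $\ms$. If $\theta$ already satisfies every $\Neq(\ms_1,\ms_2) \in \EQ$ then $\m \in \DC(\ms)|_\EQ$ and we are done. Otherwise $\theta$ identifies the two sides of some inequality, $\theta(\ms_1) = \theta(\ms_2)$. The key step is a \emph{genericity lemma}: because $\DC \vDash \EQ$, any such identification is accidental and can be broken by replacing $\theta$ with a refined witness $\theta^\sharp$ obtained by re-drawing, to pairwise distinct fresh values, the intruder-chosen atoms occurring in the conflicting positions; iterating over the finitely many violated constraints yields a $\theta^\sharp \models \EQ$ with $\theta^\sharp(\sym) \in \DC(\sym)$ for all $\sym$. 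Write $\m^\sharp = \theta^\sharp(\ms)$, so $\m^\sharp \in \DC(\ms)|_\EQ$ and hence $\m^\sharp \in \DC'(\ms')$. Since $\Oscr$ and $\Oscr'$ share no nonces other than the common black-box witnesses (which are never re-drawn), the atoms by which $\m^\sharp$ differs from $\m$ occur neither in $\ms'$ nor in $\DC'$; by a routine invariance argument, membership in $\DC'(\ms')$ is preserved along the (possibly non-injective) substitution that re-identifies those atoms, giving $\m \in \DC'(\ms')$.

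The step I expect to be the main obstacle is the genericity lemma together with its transfer back to $\m$: one must (i) delineate exactly which atoms of a ground instance may be re-drawn, (ii) show --- crucially using $\DC \vDash \EQ$ --- that re-drawing them always suffices to repair every violated inequality, and (iii) verify that the resulting substitution preserves both $\DC(\ms)$- and $\DC'(\ms')$-membership. Point (iii) is delicate precisely when a violation can only be repaired by splitting a subterm that is repeated in $\m$ rather than by a pure renaming; there one argues that $\ms'$, being able to match the more generic $\m^\sharp$, can still be instantiated $\DC'$-consistently to the less generic $\m$, since the atoms involved lie outside the vocabulary of $\DC'$ and $\ms'$. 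I would carry out the genericity lemma by structural induction on $\ms$ and on the acyclic set $\DC$ (acyclicity by Proposition~\ref{prop:trace-properties}), following the same case analysis used for $\symDer$ and for the soundness of $\canEq$ (\cf\ the proof of Theorem~\ref{th:basic-obs} and Lemma~\ref{lemma:case2}).
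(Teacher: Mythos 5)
Your ``$\Leftarrow$'' direction is fine and matches the paper, but the forward direction rests on a genericity lemma that is not available in this term model, and it fails exactly at the point~(iii) you flag as delicate. The obstruction is that ``re-drawing an atom to a fresh value'' is incompatible with derivability: a ground instance of a symbol $\sym$ with $\dc(\sym,\Sscr)\in\DC$ must be built from the fixed minimal set $\Sscr$ (plus guessables) by tupling and encryption, so there is no reservoir of fresh atoms the intruder can substitute into a conflicting position. Concretely, take $\EQ=\{\Neq(\sym,\n)\}$ with $\dc(\sym,\{\n\})\in\DC$, $\m=\n$ and witness $\theta=[\sym\mapsto\n]$. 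Here $\DC\vDash\EQ$ (e.g.\ via $\sym\mapsto\t$ for a guessable $\t$), yet every atom-level repair of $\theta$ lands on a guessable or on an element of $\Sscr$ --- both of which lie inside the vocabulary of $\DC'$, since guessables belong to every minimal set by Definition~\ref{def:minimal} --- so your transfer-back substitution $\rho$ is not an invariance of $\DC'(\ms')$-membership: from $\t\in\DC'(\ms')$ you cannot conclude $\n\in\DC'(\ms')$. Your escape hatch for this case (``$\ms'$, being able to match the more generic $\m^\sharp$, can still be instantiated $\DC'$-consistently to the less generic $\m$'') is essentially the conclusion being proved, so the argument is circular precisely where it matters.

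The repair that works is structural rather than nominal, and it is what the paper does. First reduce by induction on the size of $\ms$ to the case where $\ms=\sym$ and $\ms'=\sym'$ are both symbols (otherwise one easily exhibits a term in $\DC(\ms)|_\EQ$ outside $\DC'(\ms')$). Then, if $[\sym\mapsto\m_1]$ falsifies some inequality, pass to $\tup{\m_1,\ldots,\m_1}$ with sufficiently many copies: tupling closure keeps it in $\DC(\sym)$; since $\EQ$ is finite and contains only inequalities, a large enough arity falsifies none of them, so the tuple lies in $\DC(\sym)|_\EQ$ and hence, by the hypothesis, in $\DC'(\sym')$; and because minimal sets contain no tuples, derivability of the tuple from $\Sscr'$ forces derivability of its component $\m_1$, which is the transfer back --- with no renaming needed. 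If you substitute this tupling step for your genericity lemma, the rest of your outline goes through.
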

\begin{proof}
The reverse direction is immediate. We prove the forward direction. 
Assume (1) $\m \in \DC(\ms)|_{\EQ} \Rightarrow \m \in \DC'(\ms')$ for all ground terms $\m$ and assume that (2) $\m_1 \in \DC(\ms)$. We show that $\m_1 \in \DC'(\ms)$.

We proceed by induction on the size of $\ms$. The interesting case is when $\ms = \sym$ is a symbol which means that $\ms' = \sym'$ has to be a symbol. Otherwise, it is easy to construct a term $\m_2 \in \DC(\ms)|_{\EQ}$ such that $\m_2 \notin \DC'(\ms')$. (For example, a very large tuple of guessables.) 

Let the matching symbol substitution $\ssb = [\sym \mapsto \m_1]$. There are two cases:
\begin{itemize}
  \item $\ssb$ does not some $\EQ$ false, then $\m_1 \in \DC(\ms)|_\EQ$ and thus $\m_1 \in \DC(\ms)$ by (1).

  \item $\ssb$ falsifies some constraint in $\EQ$. Assume by contradiction that $\m_1 \notin \DC'(\ms)$.  Since $\m_1 \in \DC(\ms)$, an arbitrary large tuple $\tup{\m_1, \ldots, \m_1} \in \DC(\ms)$. However, since $\m_1 \notin \DC'(\ms)$, then (3) $\tup{\m_1, \ldots, \m_1} \notin \DC'(\ms)$. Pick a large tuple such that no constraint in $\EQ$ is falsified. (Recall that all constraints in $\EQ$ are inequality constraints.) Then $\tup{\m_1, \ldots, \m_1} \in \DC(\ms)|_\EQ$. From (1), we get $\tup{\m_1, \ldots, \m_1} \in \DC'(\ms)$ yielding a contradiction with (3). Thus $\m_1 \in \DC'(\ms)$.
\end{itemize}
\end{proof}

\begin{proof}
  $\ms \preceq_{\Oscr,\Oscr'} \ms'$ if and only if 
\begin{itemize}
  \item $\DC \nvDash \EQ$ if and only if $\termEqApprox(\ms,\ms',\Oscr,\Oscr')$ by Lemma~\ref{lem:eqCheck};
  \item or $\DC \vDash \EQ$ and $\DC' \vDash \EQ'$ and for all $\m \in \DC(\ms)|_\EQ$, we have that $\m \in \DC'(\ms')|_{\EQ'}$. By Lemma~\ref{lemma:equality}, we have that $\m \in \DC(\ms)|_\EQ$ if and only if  $\m \in \DC(\theta[\ms])|_{\EQ_2}$, where $\theta$ is the mgu of the equality constraints in $\EQ$ and $\EQ_2$ are the inequality constraints. Similarly $\m \in \DC'(\ms')|_{\EQ'}$ if and only if $\m \in \DC'(\theta'[\ms'])|_{\EQ_2'}$. Thus we only need to consider the inequality constraints $\EQ_2$ and $\EQ_2'$.

   By contraposition, we attempt to find a term $\m \in \DC(\ms)|_{\EQ_2}$ such that $\m \notin \DC'(\ms')|_{\EQ_2'}$. This leads to two possibilities, where $\ssb_1,\ssb_1'$ be the symbol substitution such that $\ssb_1[\ms] = \ssb_1'[\ms'] = \m$. This can only exists if $\ssb_1[\theta[\ms']] = \m$ where $\theta$ is the matching substitution of the symbols in $\ms'$ to terms in $\ms$ (as in Definition~\ref{def:term-approx}). Otherwise, the terms $\ms$ and $\ms'$ cannot derive the same terms.
  \begin{itemize}
   \item $\ssb_1'$ does not falsify a constraint in $\EQ_2'$ if and only if $\m \notin \DC'(\ms')$ if and only if by Lemma~\ref{lemma:case1} and Theorem~\ref{th:basic-obs} 
   $\termApprox(\ms,\ms',\Oscr,\Oscr')$ is false;
   \item $\ssb_1'$ falsifies a constraint in $\Neq(\ms_1,\ms_2) \in \EQ_2'$ if and only if $\ssb_1[\theta(\ms_1)] = \ssb_1[\theta(\ms_2)]$ if and only if $\canEq(\theta(\ms_1),\theta(\ms_2),\DC,\EQ)$ is true (Lemma~\ref{lemma:case2}). 
  \end{itemize}
\end{itemize}
\end{proof}

\end{document}